\documentclass[journal]{IEEEtran}
\usepackage{graphicx}
\usepackage{color}
\usepackage{placeins}
\usepackage{float}
\usepackage{tabularx,colortbl}
\usepackage{amssymb}
\usepackage{amsthm}
\usepackage{cite}
\usepackage{amsmath}
\usepackage{makecell}
\usepackage{subfigure}

\makeatletter                               % algorithm
\newif\if@restonecol
\makeatother

\makeatletter
\newif\if@restonecol
\makeatother

\usepackage[linesnumbered,ruled,vlined]{algorithm2e}%[ruled,vlined]{
\usepackage{algpseudocode}
\usepackage{amsmath}
\usepackage{color}
\usepackage{placeins}
\usepackage{tabularx,colortbl}
\usepackage{amssymb}

\usepackage{listings}                       % Matlab code
\usepackage[framed,numbered,autolinebreaks,useliterate]{mcode}
\theoremstyle{plain}

\newtheorem{lemm}{Lemma}

\newtheorem{coro}{Corollary}

\theoremstyle{plain}
\newtheorem{rem}{Remark}

\IEEEoverridecommandlockouts

\begin{document}

%----------------------------title&author&thanks----------------------------

\title{Low-Complexity Distributed Combining Design for Near-Field Cell-Free XL-MIMO Systems
\thanks{Z. Wang, J. Zhang, B. Xu, and B. Ai are with the State Key Laboratory of Advanced Rail Autonomous Operation, and also with the School of Electronics and Information Engineering, Beijing Jiaotong University, Beijing 100044, China, and Z. Wang is also with the Department of Communication Systems, KTH Royal Institute of Technology, 11428 Stockholm, Sweden (e-mail: \{zhewang\_77, jiayizhang, 20251197, boai\}@bjtu.edu.cn);}
\thanks{D. Niyato is with the College of Computing \& Data Science, Nanyang Technological University, Singapore 639798 (e-mail: dniyato@ntu.edu.sg);}
\thanks{S. Mao is with the Department of Electrical and Computer Engineering, Auburn University, Auburn, AL 36849-5201 USA (e-mail: smao@ieee.org);}
\thanks{Z. Han is with the Department of Electrical and Computer Engineering at the University of Houston, Houston, TX 77004 USA (e-mail: hanzhu22@gmail.com).}
}
\author{Zhe Wang, Jiayi Zhang,~\IEEEmembership{Senior Member,~IEEE,} Bokai Xu, Dusit Niyato,~\IEEEmembership{Fellow,~IEEE}, \\
Bo Ai,~\IEEEmembership{Fellow,~IEEE}, Shiwen Mao,~\IEEEmembership{Fellow,~IEEE},  Zhu Han,~\IEEEmembership{Fellow,~IEEE}}
\maketitle

%----------------------------abstract----------------------------
\begin{abstract}
In this paper, we investigate the low-complexity distributed combining scheme design for near-field cell-free extremely large-scale multiple-input-multiple-output (CF XL-MIMO) systems. Firstly, we construct the uplink spectral efficiency (SE) performance analysis framework for CF XL-MIMO systems over centralized and distributed processing schemes. Notably, we derive the centralized minimum mean-square error (CMMSE) and local minimum mean-square error (LMMSE) combining schemes over arbitrary channel estimators. Then, focusing on the CMMSE and LMMSE combining schemes, we propose five low-complexity distributed combining schemes based on the matrix approximation methodology or the symmetric successive over relaxation (SSOR) algorithm. More specifically, we propose two matrix approximation methodology-aided combining schemes: Global Statistics \& Local Instantaneous information-based MMSE (GSLI-MMSE) and Statistics matrix Inversion-based LMMSE (SI-LMMSE). These two schemes are derived by approximating the global instantaneous information in the CMMSE combining and the local instantaneous information in the LMMSE combining with the global and local statistics information by asymptotic analysis and matrix expectation approximation, respectively. Moreover, by applying the low-complexity SSOR algorithm to iteratively solve the matrix inversion in the LMMSE combining, we derive three distributed SSOR-based LMMSE combining schemes, distinguished from the applied information and initial values.

\end{abstract}
%----------------------------keywords----------------------------
\begin{IEEEkeywords}
XL-MIMO, cell-free networks, MMSE processing, asymptotic analysis, symmetric successive over relaxation.
\end{IEEEkeywords}

%\newpage
\IEEEpeerreviewmaketitle

\section{Introduction}
With the rapid development of wireless communication, wireless communication networks have evolved to the sixth-generation (6G) \cite{you2021towards,9390169,8869705}. The 6G wireless communication networks are anticipated to satisfy the rapidly growing performance requirement and increasingly diverse application scenarios \cite{ouyang2024primer,sun2025aerial,10195219}. From the fourth generation (4G) of wireless communication networks, multiple-input multiple-output (MIMO) technology has been viewed as an important facilitator to provide excellent spectral efficiency (SE) performance of wireless networks \cite{wang2025flexible,9113273,sanguinetti2019toward}. With the continuous evolution of wireless communication, MIMO technology involves continuous growing number of antennas and more flexible hardware design. Many emerging paradigms of MIMO technology emerge to empower the future wireless communication networks, such as the extremely large-scale MIMO (XL-MIMO) \cite{ZheSurvey,9903389,CM25CKJ,zhang20236g}, cell-free massive MIMO \cite{7827017,[162],OBETrans,chen2025channel}, reconfigurable intelligent surfaces (RIS) \cite{10380596,10918608,wu2024exploit}, and movable antenna \cite{10318061,shao20246d,lipengmag,11007274}. Among these emerging MIMO technologies, XL-MIMO and CF mMIMO technologies have sparked lots of research interest for their major respective characteristics of the extremely large number of antennas \cite{ZheSurvey} and the prominent distributed network topology \cite{[162]}.

For XL-MIMO technology, an extremely large number of antennas, e.g., hundreds of even to thousands of antennas, are utilized, which are usually deployed in a dense manner with the antenna spacing much smaller than the widely applied half-wavelength antenna spacing in conventional MIMO networks \cite{ZheSurvey,TWC23CKJ}. With the involvement of these numerous antennas, one major characteristic of the XL-MIMO is the near-field spherical wave characteristic, which is different from the far-field planar wave characteristic in conventional MIMO systems \cite{2024arXiv240105900L,BokaiICC,xu2024resource,2023arXiv231011044L,11159291}. As for the near-field channel model, the authors in \cite{2023arXiv231011044L} comprehensively introduced the near-field non-uniform spherical wave (NUSW) line-of-sight channel model and mathematically compared the NUSW model with the conventional planar wave model. The authors in \cite{[41]} proposed a wavenumber domain Fourier plane-wave representation-based method to model the near-field non-line-of-sight (NLoS) near-field channel, which has been widely applied in the analysis of XL-MIMO systems \cite{[54],zhao2024performance}. Based on the spherical wave-based channel model, the authors in \cite{wang2024analytical} explored the degree-of-freedom (DoF) performance for both the discrete and continuous aperture XL-MIMO systems and found that the DoF performance for XL-MIMO systems is determined by the array aperture instead of the number of antennas. To tackle the practical issues of high hardware cost and power consumption suffered by XL-MIMO, the authors first proposed an innovative concept termed array configuration codebook (ACC) in \cite{lu2025flexible}, which provided a new promising framework to enable flexible XL-MIMO cost-effectively. In \cite{[54]}, the authors implemented performance analysis for the multi-user XL-MIMO system. However, the majority of existing works on XL-MIMO focus on the single base station (BS) scenario and it is necessary to consider practical multi-BS systems.

One promising multi-BS MIMO topology is the CF mMIMO network \cite{[162],OBETrans,chen2025channel}. In CF mMIMO networks, a large number of BSs, also called as access points (APs), are deployed at a wide area, with the number of antennas each AP relatively small, e.g. 2 or 4 \cite{[162]}. All APs embrace the capability of signal processing and are linked to one or multiple central processing units (CPUs), where all APs can cooperate with each other with the assistance of CPUs to provide uniform service for user equipments (UEs). Relying on the prominent paradigm topology, many uplink processing schemes, from the fully centralized one to the fully distributed one, can be applied \cite{[162]}. Notably, as studied in \cite{[162]}, the centralized processing and large-scale fading decoding (LSFD)-aided distributed processing schemes with their respective minimum mean-square error (MMSE)-based combining schemes are most advanced to achieve excellent SE performance. Moreover, the authors in \cite{chen2025channel} were the first to apply the novel channel map technology to the CF mMIMO multiple access scheme, where prior channel knowledge is leveraged to effectively enhance transmission performance and also holds the potential to reduce transmission complexity and overhead. Motivated by the CF mMIMO networks, the multi-BSs XL-MIMO system can also be implemented in a distributed CF manner with processing ideas like CF mMIMO networks. This paradigm is called as ``CF XL-MIMO". The basic idea of CF XL-MIMO is to implement cell-free cooperation with XL antenna apertures at a moderate number of BSs. Note that the number of antennas at each site in CF XL-MIMO is much larger while the number of sites is much smaller than those of conventional CF mMIMO. ``Cell-free" mainly emphasizes the cooperative paradigm for different sites instead of deploying a large number of sites like that in conventional CF mMIMO. Compared to the widely studied single-site XL-MIMO network, distributing the antenna array across coordinated sites in the CF XL-MIMO network can provide more uniform near-field capability and enhanced macro-diversity. Compared to the CF mMIMO network, the CF XL-MIMO network, which relies on the XL-array at each site, showcases enhanced per-site local signal processing and computing capability. This alleviates the centralized processing burden and facilitates the practical implementation of CF networks in some hot-spot scenarios, such as the stadium and campus.

Some recent works consider this CF XL-MIMO paradigm \cite{lei2023uplink,10475888,liu2023double}. The authors in \cite{lei2023uplink} investigated the uplink SE performance for CF XL-MIMO networks with the distributed processing scheme. The authors in \cite{10475888} and \cite{liu2023double} focused on the multi-agent reinforcement
learning (MARL)-empowered CF XL-MIMO networks. More specifically, the authors in \cite{10475888} studied the MARL-aided CF XL-MIMO networks, from the perspective of antenna selection and power control. The authors in \cite{liu2023double} investigated a double-layer MARL-enabled power control scheme for CF XL-MIMO networks, which can efficiently suppress the interference. However, there exists a research gap for the uplink combining scheme design for CF XL-MIMO networks. The combining design poses significant importance to efficiently decode the signal and achieve the excellent SE performance. Many works have studied the combining scheme design for CF mMIMO networks \cite{[162],OBETrans}. However, there exist the following major challenges to directly utilize the widely applied combining schemes in CF mMIMO networks. Firstly, both the prevailing centralized and distributed competitive combining scheme in CF mMIMO networks, e.g., centralized MMSE (CMMSE) as in \cite[Eq. (13)]{[162]} and local MMSE (LMMSE) combining as in \cite[Eq. (16)]{[162]}, require the prerequisite of the MMSE channel estimator. However, applying the MMSE channel estimator in XL-MIMO systems poses significant computational complexity, which restricts the practical implementation of XL-MIMO networks. Thus, it is urging to study the MMSE combining schemes under a generalized channel estimator instead of limiting the channel estimator as the MMSE one. Secondly, the computation of MMSE combining schemes involves computationally demanding instantaneous information matrix inversion, which is of significantly high computational complexity when the number of antennas is large. Therefore, it is necessary to derive some low-complexity MMSE combining schemes with the aid of some potential low-complexity methods.

Motivated by the above statements, in this paper, we study the low-complexity combining scheme design for near-field CF XL-MIMO systems. Our main contributions are listed as follows.
\begin{itemize}
\item We provide the SE performance analysis frameworks for both the centralized and distributed processing-aided CF XL-MIMO systems. Notably, the near-field channel model and the mutual coupling effect are involved. More importantly, we derive the CMMSE and LMMSE combining schemes under arbitrary channel estimators. 
\item With the aid of matrix approximation, we propose two low-complexity distributed MMSE-based combining schemes. The first one is called Global Statistics \& Local Instantaneous information-based MMSE (GSLI-MMSE), where the matrix approximation is applied to approximate the global instantaneous information. The second one is called 
Statistics matrix Inversion-based LMMSE (SI-LMMSE), where the local instantaneous information in the matrix inversion is approximated by the local statistic information, and therefore the matrix inversion for this combining is based on the statistics information.
\item By employing the low-complexity symmetric successive over relaxation (SSOR) algorithm to iteratively solve the computational demanding matrix inversion in the LMMSE combining, we propose three SSOR algorithm-based LMMSE combining schemes. The first two, Instantaneous SSOR algorithm-aided LMMSE (Ins-SSOR-LMMSE) and Statistics matrix inversion SSOR algorithm-aided LMMSE (Sta-SSOR-LMMSE), are derived by directly applying the SSOR algorithm to the LMMSE and SI-MMSE combining schemes, respectively, with zero initial value. The third one is called Instantaneous SSOR algorithm with Statistics-based Initial value aided LMMSE (Ins-SI-SSOR-LMMSE), where the statistics-based initial value is introduced to enhance the achievable performance and accelerate the convergence speed of the SSOR algorithm.
\end{itemize}

The rest of this paper is organized as follows. Section~\ref{system} introduces the system model of the considered CF XL-MIMO network including the fundamentals of the near-field channel model and the mutual coupling effect. In Section~\ref{UL}, we introduce the fundamentals of uplink transmission, where the centralized and distributed processing schemes are studied and their respective achievable SE expressions are derived. In Section~\ref{Sec_AA}, two matrix approximation-aided low-complexity distributed MMSE-based combining schemes, GSLI-MMSE and SI-LMMSE, are proposed. Section~\ref{Sec_SSOR} presents three SSOR algorithm-based LMMSE combining schemes, Ins-SSOR-LMMSE, Sta-SSOR-LMMSE, and Ins-SI-SSOR-LMMSE combining schemes. In Section~\ref{numerical}, numerical results for all considered low-complexity distributed combining schemes are presented. Finally, we draw the major conclusions in Section~\ref{conclusion}.

\textbf{\emph{Notation}}: Let boldface lowercase letters $\mathbf{x}$ and boldface uppercase letters $\mathbf{X}$ denote the column vectors and matrices, respectively. 
$\mathrm{diag}\left( \mathbf{X}_1,\cdots ,\mathbf{X}_n \right)$ represents a block-diagonal matrix, where the square matrices $\mathbf{X}_1,\cdots ,\mathbf{X}_n $ are located on the diagonal. We define $\triangleq$, $\mathbb{E} \left\{ \cdot \right\}$, and $\mathrm{tr}\left\{ \cdot \right\}$ as the definitions, the expectation operator, and the trace operator, respectively. Let $\left( \cdot \right) ^H$, $\left( \cdot \right) ^T$, and $\left( \cdot \right) ^*$ represent the conjugate transpose, transpose, and conjugate, respectively. $\mathbf{I}_{n\times n}$ denotes the $n\times n$ identity matrix. We define $\mathbf{x}\sim \mathcal{N} _{\mathbb{C}}\left( \bf{0},\mathbf{R} \right)$ as a circularly symmetric complex Gaussian distribution vector with correlation matrix $\mathbf{R}$.

\vspace*{-0.3cm}
\begin{figure}[t]
\centering
\includegraphics[scale=0.37]{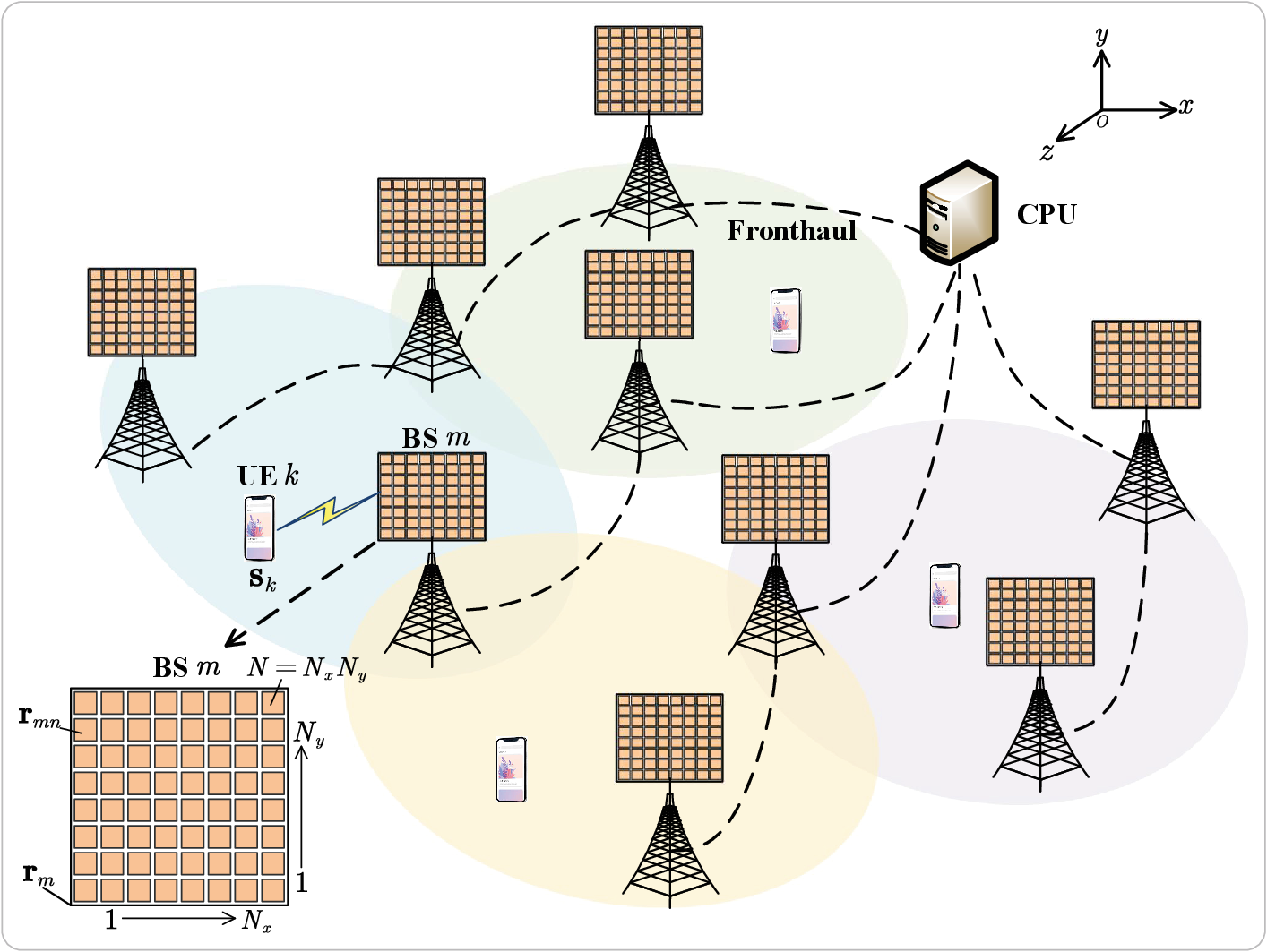}
\caption{Illustration of the studied CF XL-MIMO network. \label{System}}
\vspace{-0.2cm}
\end{figure}

\section{System Model}\label{system}
In this paper, we investigate a CF XL-MIMO network, where $M$ BSs, linked via fronthaul links, serve $K$ single-antenna UEs in a wide $L \times L  \, \mathrm{m}^2$ square coverage area, as illustrated as Fig.~\ref{System}. Each BS is equipped with the XL-UPA, where the numbers of antennas each BS in the horizontal and vertical directions are $N_x$ and $N_y$, respectively. Thus, each XL-UPA has total $N=N_x\times N_y$ antennas. We let the center of the coverage area be the origin and we assume that all UPAs are parallel to the $x-y$ plane. Then, let the coordinates of the bottom left point of the $m$-th BS be $\mathbf{r}_m=[ x_m,y_m,z_m ] ^T\in \mathbb{R} ^3$ with $y_m=L_{BS}$ being the height of each BS. Following the similar representation method as in \cite{wang2024analytical}, we can derive the position of the $n$-th antenna for the $m$-th BS, where the antenna index is counted row-by-row from the bottom left, as $\mathbf{r}_{mn}=[r_{mn,x},r_{mn,y},r_{mn,z}]^T=[ x_m+\mathrm{mod}( n-1,N_x ) \Delta _x,L_{BS}+\lfloor ( n-1 ) /N_x \rfloor \Delta _y,z_m ] ^T\in \mathbb{R} ^3$. $\Delta _x$ and $\Delta _y$ denote the horizontal and vertical antenna spacing, respectively. Moreover, we define the position of the $k$-th UE as $\mathbf{s}_k=[s_{k,x},s_{k,y},s_{k,z}] ^T\in \mathbb{R} ^3$, where $s_{k,y}=L_{UE}$ is the height of each UE.

\subsection{Near-Field Channel Model}\label{channel}
The near-field channel model with both the LoS and NLoS components is considered. We define the uplink channel vector between the $m$-th BS and the $k$-th UE as $\mathbf{h}_{mk}=\overline{\mathbf{h}}_{mk}+\check{\mathbf{h}}_{mk}\in \mathbb{C} ^N$, where $\overline{\mathbf{h}}_{mk}$ and $\check{\mathbf{h}}_{mk}$ are the LoS and NLoS components, respectively. For the LoS component $\overline{\mathbf{h}}_{mk}$, we apply the non-uniform spherical wave (NUSW) propagation model introduced in 
\cite[Sec. II]{2023arXiv231011044L}. More specifically, we have $\overline{\mathbf{h}}_{mk}=\sqrt{\beta _{mk}^{\mathrm{LoS}}}\left[ \frac{d_{mk}}{d_{m1k}}e^{-j\frac{2\pi}{\lambda}( d_{m1k}-d_{mk} )},\dots ,\frac{d_{mk}}{d_{mNk}}e^{-j\frac{2\pi}{\lambda}( d_{mNk}-d_{mk} )} \right] ^T$, 
where $d_{mnk}=| \mathbf{r}_{mn}-\mathbf{s}_k |$ is the distance between the $n$-th antenna of the $m$-th BS and the $k$-th UE, $d_{mk}=| \mathbf{r}_{m}-\mathbf{s}_k |$ is the reference distance, $\lambda$ is the wavelength, and $\beta _{mk}^{\mathrm{LoS}}$ denotes the large-scale fading coefficient for the LoS component between BS $m$ and UE $k$, which captures many impacts like the distance-dependent pathloss, transceiver hardware, and antenna gain, e.t.c. 

For the NLoS component $\check{\mathbf{h}}_{mk}$, the Fourier plane-wave representation method \cite{[41],[54],zhao2024performance} is utilized to model it. Note that the $n$-th element of $\check{\mathbf{h}}_{mk}$ can be modeled based on the four-dimensional (4D) Fourier plane-wave representation \cite{[41]} as
\begin{equation}\label{NLOS_4D}
\begin{aligned}
[ \check{\mathbf{h}}_{mk} ]_n\!\!=\!&\frac{1}{4\pi ^2}\!\!\int{\!\!\int\!\!{\int\!\!{\int_{\mathcal{D} \times \mathcal{D}}\!\!{a_r\left( k_x,k_y,\mathbf{r}_{mn} \right)H_a\left( k_x,k_y,\kappa _x,\kappa _y \right)}}}}\\
&\times a_s( \kappa _x,\kappa _y,\mathbf{s}_k ) dk_xdk_yd\kappa _xd\kappa _y,
\end{aligned}
\end{equation}
where $a_r( k_x,k_y,\mathbf{r}_{mn} ) =e^{j[ k_xr_{mn,x}+k_yr_{mn,y}+\gamma ( k_x,k_y ) r_{mn,z} ]}$ is the receiving response, $a_s( \kappa _x,\kappa _y,\mathbf{s}_k ) =e^{-j[ \kappa _xs_{k,x}+\kappa _ys_{k,y}+\gamma ( \kappa _x,\kappa _y ) s_{k,z} ]}$ is the transmitting response, $\boldsymbol{k }=[ k _x,k _y,k_z ] ^T$ is the receiving wave vector, $\boldsymbol{\kappa }=[ \kappa _x,\kappa _y,\kappa _z ] ^T$ is the transmitting wave vector, $ \gamma ( \kappa _x,\kappa _y ) =( \kappa ^2-\kappa _{x}^{2}-\kappa _{y}^{2} ) ^{{{1}/{2}}}$, and $\mathcal{D} =\{  (k_x,k_y)\in \mathbb{R} ^2 |k_{x}^{2}+k_{y}^{2}\le \kappa ^2 \} $ denotes the integration region with $\kappa=2\pi/\lambda$ being the wavenumber. Moreover, the angular response $H_a( k_x,k_y,\kappa _x,\kappa _y ) $ is modeled as 
\begin{equation}\notag
H_a\left( k_x,k_y,\kappa _x,\kappa _y \right) =\frac{A\left( k_x,k_y,\kappa _x,\kappa _y \right) W\left( k_x,k_y,\kappa _x,\kappa _y \right)}{\gamma ^{{{1}/{2}}}\left( k_x,k_y \right) \gamma ^{{{1}/{2}}}\left( \kappa _x,\kappa _y \right)},\end{equation}
where $A( k_x,k_y,\kappa _x,\kappa _y ) $ denotes the spectral factor, which is set as $A( k_x,k_y,k_z ) =\frac{2\pi}{\sqrt{\kappa}}$ in the  isotropic scattering environment, and $W( k_x,k_y,\kappa _x,\kappa _y ) \sim \mathcal{C} \mathcal{N} ( 0,1 ) $ collects the random characteristics. Furthermore, \eqref{NLOS_4D} can be approximately expanded based on the theory of Fourier plane-wave expansion \cite{[41],[54],zhao2024performance} as
\begin{equation}\label{NLOS_expansion}
\begin{aligned}
\check{\mathbf{h}}_{mk}\approx \sqrt{N}\sum_{\left( \ell _x,\ell _y \right) \in \mathcal{E} _r}{H_{a,mk}\left( \ell _x,\ell _y \right) \mathbf{a}_{mk}\left( \ell _x,\ell _y \right)},
\end{aligned}
\end{equation}
where the $n$-th element of $\mathbf{a}_{mk}$ is $[ \mathbf{a}_{mk}( \ell _x,\ell _y ) ] _n=\frac{1}{\sqrt{N}}a_r( \frac{2\pi \ell _x}{L_{r,x}},\frac{2\pi \ell _y}{L_{r,y}},\mathbf{r}_{mn} ) e^{-j\kappa s_{k,z}}$ with $L_{r,x}=N_x\Delta _x$ and $L_{r,y}=N_y\Delta _y$, and $\mathcal{E} _r=\{ ( \ell _x,\ell _y ) \in \mathbb{Z} ^2:( \ell _x\lambda /L_{r,x} ) ^2+( \ell _y\lambda /L_{r,y} ) ^2\le 1 \} $ is the lattice ellipse at the receiver. Moreover, we have $H_{a,mk}( \ell _x,\ell _y ) \sim \mathcal{C} \mathcal{N} ( 0,\beta _{mk}^{\mathrm{NLoS}}\bar{\sigma}_{mk}^{2}( \ell _x,\ell _y ) ) $, where $\beta _{mk}^{\mathrm{NLoS}}$ denotes the large-scale fading coefficient for the NLoS component between BS $m$ and UE $k$ and $\bar{\sigma}_{mk}^{2}( \ell _x,\ell _y ) $ is the normalized variance at the sampling point $( \ell _x,\ell _y )$ with $\bar{\sigma}_{mk}^{2}( \ell _x,\ell _y ) ={{\sigma}_{mk}^{2}\left( \ell _x,\ell _y \right)}/{\sum_{\left( \ell _x,\ell _y \right) \in \mathcal{E} _r}{\sigma _{mk}^{2}\left( \ell _x,\ell _y \right)}}$, where $\sigma _{mk}^{2}( \ell _x,\ell _y )$ is the variance at $( \ell _x,\ell _y )$, which can be computed as in \cite[Appendix]{[41]}.
Based on the structure of \eqref{NLOS_expansion}, we can further construct $\check{\mathbf{h}}_{mk}$ as 
\begin{equation}\label{NLOS_channel}
\begin{aligned}
\check{\mathbf{h}}_{mk}&=\sqrt{N}\!\!\sum_{( \ell _x,\ell _y ) \in \mathcal{E} _r}\!\!{H_{a,mk}( \ell _x,\ell _y ) \mathbf{a}_{mk}( \ell _x,\ell _y )}\\
&=\mathbf{U}_{mk}\mathbf{\Sigma }_{mk}^{{{1}/{2}}}\check{\mathbf{h}}_{mk,\mathrm{iid}},
\end{aligned}
\end{equation}
where $\mathbf{U}_{mk}\in \mathbb{C} ^{N\times n_r}$ is a deterministic semi-unitary matrix, collecting $n_r$ vectors $\{ \mathbf{a}_{mk}\left( \ell _x,\ell _y \right) \in \mathbb{C} ^{N} \} _{( \ell _x,\ell _y ) \in \mathcal{E} _r}$, with $\mathbf{U}_{mk}\mathbf{U}_{mk}^{H}=\mathbf{I}_{n_r}$ and $n_r=\left| \mathcal{E} _r \right|$, $\mathbf{\Sigma }_{mk}\in \mathbb{C} ^{n_r\times n_r}$ is a diagonal matrix with diagonal elements being $\{ N\beta _{mk}^{\mathrm{NLoS}}\bar{\sigma}_{mk}^{2}( \ell _x,\ell _y ) \} _{( \ell _x,\ell _y ) \in \mathcal{E} _r} $, and $\check{\mathbf{h}}_{mk,\mathrm{iid}}\sim \mathcal{C} \mathcal{N} ( \mathbf{0},\mathbf{I}_{n_r} ) $. Furthermore, we can compute the spatial correlation matrix for $\check{\mathbf{h}}_{mk}$ as $\mathbf{R}_{mk}=\mathbb{E} \{ \check{\mathbf{h}}_{mk}\check{\mathbf{h}}_{mk}^{H} \} =\mathbf{U}_{mk}\mathbf{\Sigma }_{mk}\mathbf{U}_{mk}^{H}\in \mathbb{C} ^{N\times N}$, which satisfies $\beta _{mk}^{\mathrm{NLoS}}=\mathrm{tr}( \mathbf{R}_{mk} ) /N$. In summary, we can easily derive $\mathbf{h}_{mk}$, which is composed of both the LoS and NLoS components.

\begin{rem}
The near-field spherical wave channel introduced above can reduce to the far-field planar wave channel as the transmitting distance increases. Notably, as will be discussed in Remark~\ref{general}, our studied combining schemes in this paper showcase promising applicability to many channel models instead of just the near-field-specific ones. Thus, the near-field channel model methodology is a representative choice, instead of a prerequisite for our proposed combining schemes.
\end{rem}

\begin{rem}
There are many distance boundaries to distinguish the near-field and far-field, and a well-studied one is the Rayleigh distance, which can be computed as $2D^2/\lambda$ for a single antenna array, where $D$ is the array aperture. Note that this Rayleigh distance is computed based on the maximum phase error across the antenna array \cite{sun2025differentiate}. However, this single-array motivated distance boundary may not be directly applicable to the CF XL-MIMO network since there exists more than one antenna array in a cooperative processing manner. Thus, it is not rigorous to infer a network-level near-field boundary from any one antenna array in isolation. Indeed, the network-level near-field region boundary for the CF XL-MIMO network remains a challenging open research direction. Some criteria, motivated by the antenna array gain or degree-of-freedom \cite{sun2025differentiate} across the whole network with cooperative BSs, may be the potential solutions.
\end{rem}

\subsection{Mutual Coupling Effect}\label{Mutual}
In XL-MIMO systems, the number of antennas is extremely large. To deploy these numerous antennas, the antenna spacing is typically much smaller than the conventional half-wavelength antenna spacing in massive MIMO systems \cite{ZheSurvey}. When the antenna spacing is small, the mutual coupling effect would be severe. In this part, we introduce the modeling of the mutual coupling effect. We define the mutual coupling matrix at each BS as $\mathbf{Z}_{BS}\in \mathbb{C} ^{N\times N}$, which can be denoted as \cite{balanis2016antenna,li2017exploiting}
\begin{equation}\label{MC_Matrix}
\mathbf{Z}_{BS}=( Z_A+Z_L ) ( \mathbf{Z}_{BS,C}+Z_L\mathbf{I}_N ) ^{-1},
\end{equation}
where $ Z_A$, $ Z_L$, and $\mathbf{Z}_{BS,C}\in \mathbb{C} ^{N\times N}$ denote the antenna impedance, load impedance, and mutual impedance matrix at the BS side, respectively. We can further denote $ Z_A$ as $Z_A=R_{Z_A}+jX_{Z_A}$ with $R_{Z_A}$ and $X_{Z_A}$ being the resistance and the reactance, respectively. Moreover, based on \cite{li2017exploiting}, we have $R_{Z_A}=\frac{\eta}{2\pi \sin ^2( \frac{\kappa \Delta _l}{2} )}\{ \gamma _0+\ln ( \kappa \Delta _l ) -C_i( \kappa \Delta _l ) +\frac{\sin ( \kappa \Delta _l )}{2}[ S_i( 2\kappa \Delta _l ) -2S_i( \kappa \Delta _l ) ] +\frac{\cos ( \kappa \Delta _l )}{2}[ \gamma _0+\ln ( \frac{\kappa \Delta _l}{2} ) +C_i( 2\kappa \Delta _l ) -2C_i( \kappa \Delta _l ) ] \} $ and 
$X_{Z_A}=\frac{\eta}{4\pi \sin ^2( \frac{\kappa \Delta _l}{2} )}\{2S_i( \kappa \Delta _l ) +\cos ( \kappa \Delta _l ) [ 2S_i( \kappa \Delta _l ) -S_i( 2\kappa \Delta _l ) ]-\sin ( \kappa \Delta _l ) [ 2C_i( \kappa \Delta _l ) -C_i( 2\kappa \Delta _l ) -C_i( \frac{2\kappa r^2}{\Delta _l} ) ] \label{XZA} \}$, where $\Delta _l$ is the dipole length of each antenna element, $r$ denotes the wire radius, $\eta $ denotes the intrinsic impedance, $\gamma _0$ is the Euler constant, respectively. Besides, $S_i(\cdot)$ and $C_i(\cdot)$ denote the sine and cosine integral functions, respectively. The modeling of $\mathbf{Z}_{BS,C}$ is given in Appendix~\ref{ZBS}.

\section{Uplink Transmission}\label{UL}
Relying on the distributed deployment as in Fig.~\ref{System}, each BS embraces the ability to receive and process the signals. Mainly two uplink processing schemes can be implemented in the studied CF XL-MIMO system: centralized and distributed processing schemes. These two schemes are motivated from the promising CF mMIMO technology \cite{[162],OBETrans}. To maintain the overall structure of this paper, we first introduce the phase of channel estimation, and then briefly introduce these two processing schemes.

\subsection{Pilot Transmission and Channel Estimation}\label{CE}
In the phase of pilot transmission, $\tau _p$ mutually orthogonal pilot signals, $\boldsymbol{\phi} _1,\dots ,\boldsymbol{\phi} _{\tau _p}$, are applied\footnote{This orthogonal pilot setting poses challenges in scenarios with high user density, which showcase the application of non-orthogonal pilot \cite{9169906} and robust signal processing schemes under severe pilot contamination.}, where $\left\| \boldsymbol{\phi} _t \right\| ^2=\tau _p$. All UEs send their respective pilot signals to all BSs and the received pilot signal at BS $m$ can be denoted as 
\begin{equation}
\mathbf{Y}_{m}^{p}=\sum_{k=1}^K{\sqrt{p_k}\mathbf{Z}_{BS}\mathbf{h}_{mk}\boldsymbol{\phi }_{t_k}^{T}}+\mathbf{N}_{m}^{p}\in \mathbb{C} ^{N\times \tau _p},
\end{equation}
where $\boldsymbol{\phi} _{t_{k}}$ denotes the pilot signal transmitted by UE $k$ with $t_k\in\{1,\dots,\tau _p\}$, $p_k$ is the transmitting power for UE $k$, and $\mathbf{N}_{m}^{p}\in \mathbb{C} ^{N\times \tau _p}$ is the received noise at BS $m$ with independent $\mathcal{N} _{\mathbb{C}}\left( 0,\sigma ^2\right)$ elements with $\sigma ^2$ being the uplink noise power. With the introduction of the mutual coupling effect, we define the effective channel between BS $m$ and UE $k$ as 
\begin{equation}
\mathbf{g}_{mk}\triangleq \mathbf{Z}_{BS}\mathbf{h}_{mk}=\mathbf{Z}_{BS}\bar{\mathbf{h}}_{mk}+\mathbf{Z}_{BS}\check{\mathbf{h}}_{mk}\triangleq \bar{\mathbf{g}}_{mk}+\check{\mathbf{g}}_{mk}\in \mathbb{C} ^N
\end{equation} 
with $\check{\mathbf{R}}_{mk}\triangleq \mathbb{E} \{ \check{\mathbf{g}}_{mk}\check{\mathbf{g}}_{mk}^{H} \} =\mathbf{Z}_{BS}\mathbf{R}_{mk}\mathbf{Z}_{BS}^{H}$ being the effective spatial correlation matrix for the effective NLoS component $\check{\mathbf{g}}_{mk}$ between BS $m$ and UE $k$.

With the aid of the theories in \cite[Sec. II-A]{[162]}, we can easily derive the MMSE estimate of $\mathbf{g}_{mk}$ as 
\begin{equation}\label{MMSE_Estimator}
\hat{\mathbf{g}}_{mk}^{\mathrm{MMSE}}=\bar{\mathbf{g}}_{mk}+\sqrt{p_k}\check{\mathbf{R}}_{mk}\mathbf{\Psi }_{mk}^{-1}( \mathbf{y}_{mk}^{p}-\bar{\mathbf{y}}_{mk}^{p} ),\end{equation} 
where $\mathbf{\Psi }_{mk}\triangleq{\mathbb{E} \{ ( \mathbf{y}_{mk}^{p}-\bar{\mathbf{y}}_{mk}^{p} ) \left( \mathbf{y}_{mk}^{p}-\bar{\mathbf{y}}_{mk}^{p} \right) ^H \}}/{\tau _p}=\sum_{l\in \mathcal{P} _k}{p_l\tau _p\check{\mathbf{R}}_{ml}}+\sigma ^2\mathbf{I}_N$, $\mathbf{y}_{mk}^{p}=\mathbf{Y}_{m}^{p}\boldsymbol{\phi} _{t_k}^{*}=\sum_{l\in \mathcal{P} _k}{\sqrt{p_l}\tau _p\mathbf{g}_{ml}}+\mathbf{n}_{mk}^{p}$, $\bar{\mathbf{y}}_{mk}^{p}=\sum_{l\in \mathcal{P} _k}{\sqrt{p_l}\tau _p\bar{\mathbf{g}}_{ml}}$, and $\mathbf{n}_{mk}^{p}=\mathbf{N}_{m}^{p}\boldsymbol{\phi} _{k}^{*}\sim \mathcal{N} _{\mathbb{C}}( \mathbf{0},\tau _p\sigma ^2\mathbf{I}_N ) $, with $\mathcal{P} _k$ being the subset of UEs using the same pilot as UE $k$ including itself. Note that the computation of the MMSE estimate involves the $N\times N$ matrix inversion, which is computationally intensive when $N$ is large. Motivated by the structure of the MMSE estimate, by replacing $\check{\mathbf{R}}_{mk}\mathbf{\Psi }_{mk}^{-1}$ in $\hat{\mathbf{g}}_{mk}^{\mathrm{MMSE}}$ by arbitrary channel statistics-based matrix $\mathbf{A}_{mk}$, we derive a generalized estimator as
\begin{equation}\label{generalized_CE}
\hat{\mathbf{g}}_{mk}=\bar{\mathbf{g}}_{mk}+\mathbf{A}_{mk}( \mathbf{y}_{mk}^{p}-\bar{\mathbf{y}}_{mk}^{p} ).
\end{equation}
Then, we define the channel estimation error as $\tilde{\mathbf{g}}_{mk}\triangleq\mathbf{g}_{mk}-\hat{\mathbf{g}}_{mk}$. Note that, for the MMSE estimator, the channel estimate $\hat{\mathbf{g}}_{mk}$ and estimation error $\tilde{\mathbf{g}}_{mk}$ are independent \cite{[162]}. However, for an arbitrary matrix $\mathbf{A}_{mk}$, the channel estimate and estimation error may not be independent but correlated. More specifically, based on \eqref{generalized_CE}, we can derive the covariance matrices for $\hat{\mathbf{g}}_{mk}$ and $\tilde{\mathbf{g}}_{mk}$, and the correlation matrix between $\hat{\mathbf{g}}_{mk}$ and $\tilde{\mathbf{g}}_{mk}$ as $\bar{\mathbf{R}}_{mk}\triangleq \mathbb{E} \{ \hat{\mathbf{g}}_{mk}\hat{\mathbf{g}}_{mk}^{H} \} =\bar{\mathbf{g}}_{ml}\bar{\mathbf{g}}_{ml}^{H}+\hat{\mathbf{R}}_{ml}$  with $\hat{\mathbf{R}}_{ml}= \mathbb{E} \{\mathbf{A}_{mk}( \mathbf{y}_{mk}^{p}-\bar{\mathbf{y}}_{mk}^{p})( \mathbf{y}_{mk}^{p}-\bar{\mathbf{y}}_{mk}^{p})^H\mathbf{A}_{mk}^H\} =\tau _p\mathbf{A}_{mk}\mathbf{\Psi }_{mk}\mathbf{A}_{mk}^{H}$, $\mathbf{C}_{mk}\triangleq \mathbb{E} \{ \tilde{\mathbf{g}}_{mk}\tilde{\mathbf{g}}_{mk}^{H} \} =\check{\mathbf{R}}_{mk}-\sqrt{p_k}\tau _p\check{\mathbf{R}}_{mk}\mathbf{A}_{mk}^{H}-\mathbf{A}_{mk}\sqrt{p_k}\tau _p\check{\mathbf{R}}_{mk}+\tau _p\mathbf{A}_{mk}\mathbf{\Psi }_{mk}\mathbf{A}_{mk}^{H}$, and $\mathbf{B}_{mk}\triangleq \mathbb{E} \{ \hat{\mathbf{g}}_{mk}\tilde{\mathbf{g}}_{mk}^{H} \} =\sqrt{p_k}\tau _p\mathbf{A}_{mk}\check{\mathbf{R}}_{mk}-\tau _p\mathbf{A}_{mk}\mathbf{\Psi }_{mk}\mathbf{A}_{mk}^{H}$, respectively. These matrices would also be applied in the combining design in the following. Note that the MMSE channel estimator in \eqref{MMSE_Estimator} is a special case of the generalized channel estimator introduced in \eqref{generalized_CE}. By letting $\mathbf{A}_{mk}=\sqrt{p_k}\check{\mathbf{R}}_{mk}\mathbf{\Psi }_{mk}^{-1}$, the generalized channel estimator in \eqref{generalized_CE} can reduce to the MMSE channel estimator in \eqref{MMSE_Estimator}. If the system cannot fully capture the accurate channel covariance matrices information, or if there exists severe estimation errors for channel covariance matrices, it is not feasible to implement the MMSE channel estimator based on accurate channel matrices. Another potential channel estimator choice is the element-wise MMSE (EW-MMSE) channel estimator $\mathbf{A}_{mk}=\sqrt{p_k}\mathbf{D}_{mk}\mathbf{\Gamma }_{mk}^{-1}$, where only the diagonal components of the channel matrices are applied, with $\mathbf{D}_{mk}=\mathrm{diag}( [ \check{\mathbf{R}}_{mk} ] _{nn}: n=1,\dots ,N )$ and $\mathbf{\Gamma }_{mk}=\mathrm{diag}( [ \mathbf{\Psi }_{mk} ] _{nn}: n=1,\dots ,N )$.

\subsection{Data Transmission}\label{UL_Data}
During the phase of the uplink data transmission, all UEs simultaneously transmit $\tau_c-\tau_p$ data symbols within each coherence block and the received data signal at BS $m$ can be represented as \begin{equation}\label{received_data}
\mathbf{y}_{m}^{d}=\sum_{k=1}^K{\mathbf{Z}_{BS}\mathbf{h}_{mk}x_k}+\mathbf{n}_m=\sum_{k=1}^K{\mathbf{g}_{mk}x_k}+\mathbf{n}_m,
\end{equation}
where $x_k\sim \mathcal{N} _{\mathbb{C}}( 0,p_k) $ represents the data symbol sent by UE $k$ and $\mathbf{n}_m\sim \mathcal{N} _{\mathbb{C}}( \mathbf{0},\sigma ^2\mathbf{I}_N ) $ denotes the receive noise at AP $m$. Note that after receiving the data signal as \eqref{received_data}, each BS can just serve as a relay or can implement a distributed data detection, which corresponds to the centralized or distributed processing schemes in the following.

\subsubsection{Centralized Processing}\label{centralized}
Under the centralized processing, all the received pilot and data signals at all BSs are transmitted to the CPU, where all the tasks of channel estimation and data detection are implemented at the CPU. Relying on the theories in \cite{[162]}, we can derive the estimate of the collective effective channel $\mathbf{g}_k=[ \mathbf{g}_{1k}^{T},\dots ,\mathbf{g}_{Mk}^{T} ] ^T=\bar{\mathbf{g}}_k+\check{\mathbf{g}}_k\in \mathbb{C} ^{MN}$ with $\check{\mathbf{R}}_k\triangleq \mathbb{E} \{ \check{\mathbf{g}}_k\check{\mathbf{g}}_{k}^{H} \} =\mathrm{diag}( \check{\mathbf{R}}_{1k},\dots ,\check{\mathbf{R}}_{Mk} ) \in \mathbb{C} ^{MN\times MN}$ as 
\begin{equation}\label{CE_Collective}
\hat{\mathbf{g}}_k=\bar{\mathbf{g}}_k+\mathbf{A}_k( \mathbf{y}_{k}^{p}-\bar{\mathbf{y}}_{k}^{p} ),
\end{equation}
where $\mathbf{A}_k\triangleq \mathrm{diag}( \mathbf{A}_{1k},\dots ,\mathbf{A}_{Mk} ) \in \mathbb{C} ^{MN\times MN}$, $\mathbf{y}_{k}^{p}\triangleq[ \mathbf{y}_{1k}^{p,T},\dots ,\mathbf{y}_{Mk}^{p,T} ] ^T\in \mathbb{C} ^{MN}$, $\bar{\mathbf{y}}_{k}^{p}\triangleq[ \bar{\mathbf{y}}_{1k}^{p,T},\dots ,\bar{\mathbf{y}}_{Mk}^{p,T} ] \in \mathbb{C} ^{MN}$, $\mathbf{\Psi }_k\triangleq\mathrm{diag}( \mathbf{\Psi }_{1k},\dots ,\mathbf{\Psi }_{Mk} ) \in \mathbb{C} ^{MN\times MN}$. Moreover, we have $\hat{\mathbf{R}}_{k}\triangleq \mathbb{E} \{ \mathbf{A}_k( \mathbf{y}_{k}^{p}-\bar{\mathbf{y}}_{k}^{p})( \mathbf{y}_{k}^{p}-\bar{\mathbf{y}}_{k}^{p})^H \mathbf{A}_k^H\} =\tau _p\mathbf{A}_{k}\mathbf{\Psi }_{k}\mathbf{A}_{k}^{H}=\mathrm{diag}( \hat{\mathbf{R}}_{1k},\dots ,\hat{\mathbf{R}}_{Mk} )\in \mathbb{C} ^{MN\times MN}$, $\mathbf{B}_k\triangleq \mathbb{E} \{ \hat{\mathbf{g}}_k\tilde{\mathbf{g}}_{k}^{H} \} =\sqrt{p_k}\tau _p\mathbf{A}_k\check{\mathbf{R}}_k-\tau _p\mathbf{A}_k\mathbf{\Psi }_k\mathbf{A}_{k}^{H}=\mathrm{diag}( \mathbf{B}_{1k},\dots ,\mathbf{B}_{Mk} ) \in \mathbb{C} ^{MN\times MN}$, and $\mathbf{C}_{k}\triangleq \mathbb{E} \{ \tilde{\mathbf{g}}_{k}\tilde{\mathbf{g}}_{k}^{H} \} =\check{\mathbf{R}}_{k}-\sqrt{p_k}\tau _p\check{\mathbf{R}}_{k}\mathbf{A}_{k}^{H}-\mathbf{A}_{k}\sqrt{p_k}\tau _p\check{\mathbf{R}}_{k}+\tau _p\mathbf{A}_{k}\mathbf{\Psi }_{k}\mathbf{A}_{m}^{H}=\mathrm{diag}( \mathbf{C}_{1k},\dots ,\mathbf{C}_{Mk} ) \in \mathbb{C} ^{MN\times MN}$, respectively, with $\tilde{\mathbf{g}}_k\triangleq \mathbf{g}_k-\hat{\mathbf{g}}_k$ being the estimation error of $\mathbf{g}_k$.

Furthermore, the collective received data signal can be denoted as $\mathbf{y}^d=\sum_{k=1}^K{\mathbf{g}_kx_k}+\mathbf{n}$ with $\mathbf{n}\sim \mathcal{N} _{\mathbb{C}}( \mathbf{0},\sigma ^2\mathbf{I}_{MN}) $. The CPU can select an arbitrary received combining scheme $\mathbf{v}_k\in \mathbb{C} ^{MN}$ to decode the data symbol for UE $k$ as
\begin{equation}
\hat{x}_k=\mathbf{v}_{k}^{H}\mathbf{y}^d=\mathbf{v}_{k}^{H}\mathbf{g}_kx_k+\sum_{l\ne k}^K{\mathbf{v}_{k}^{H}\mathbf{g}_lx_l}+\mathbf{v}_{k}^{H}\mathbf{n}.
\end{equation}
One well-studied combining scheme is the MMSE combining, which can minimize the conditional MSE $\mathrm{MSE}_k=\mathbb{E} \{ | x_k-\hat{x}_k |^2 |\hat{\mathbf{g}}_k \} $. However, most of existing MMSE combining formulations, such as \cite[Eq. (13)]{[162]} and \cite[Eq. (12)]{OBETrans}, are based on the MMSE channel estimator, which mismatch the scenario with the generalized estimator in this paper. Thus, it is vital to derive the generalized estimator-based MMSE combining scheme as in the following corollary.

\begin{coro}\label{C-MMSE}
Under the generalized estimator as in \eqref{CE_Collective}, the centralized MMSE (CMMSE) combining scheme minimizing $\mathrm{MSE}_k=\mathbb{E} \{ | x_k-\hat{x}_k |^2 |\hat{\mathbf{g}}_k \} $ can be represented as 
\begin{equation}\label{eq_CMMSE}
\mathbf{v}_k=p_k\left[ \sum_{l=1}^K{p_l\left( \hat{\mathbf{g}}_l\hat{\mathbf{g}}_{l}^{H}+\mathbf{B}_l+\mathbf{B}_{l}^{H}+\mathbf{C}_l \right)}+\sigma ^2\mathbf{I}_{MN} \right] ^{-1}\hat{\mathbf{g}}_k.
\end{equation}
\end{coro}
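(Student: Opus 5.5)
The plan is to write the conditional MSE as a convex quadratic form in $\mathbf{v}_k$ and set its Wirt\-inger gradient to zero. Substituting $\hat{x}_k=\mathbf{v}_k^H\mathbf{y}^d$ gives
\begin{equation}\notag
\mathrm{MSE}_k=p_k-p_k\mathbb{E}\{\mathbf{v}_k^H\mathbf{g}_k|\hat{\mathbf{g}}_k\}-p_k\mathbb{E}\{\mathbf{g}_k^H\mathbf{v}_k|\hat{\mathbf{g}}_k\}+\mathbf{v}_k^H\Big(\sum_{l=1}^K p_l\mathbb{E}\{\mathbf{g}_l\mathbf{g}_l^H|\hat{\mathbf{g}}_k\}+\sigma^2\mathbf{I}_{MN}\Big)\mathbf{v}_k .
\end{equation}
This uses the independence of the data symbols, which are zero-mean with variance $p_l$, from the channels and the noise. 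Because the matrix in the quadratic term is positive definite, minimizing over $\mathbf{v}_k$ gives $\mathbf{v}_k=p_k\big(\sum_l p_l\mathbb{E}\{\mathbf{g}_l\mathbf{g}_l^H|\cdot\}+\sigma^2\mathbf{I}\big)^{-1}\mathbb{E}\{\mathbf{g}_k|\cdot\}$. The rest of the proof evaluates the two conditional moments.

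\textbf{Key step.} Split $\mathbf{g}_l=\hat{\mathbf{g}}_l+\tilde{\mathbf{g}}_l$. Then
\begin{equation}\notag
\mathbf{g}_l\mathbf{g}_l^H=\hat{\mathbf{g}}_l\hat{\mathbf{g}}_l^H+\hat{\mathbf{g}}_l\tilde{\mathbf{g}}_l^H+\tilde{\mathbf{g}}_l\hat{\mathbf{g}}_l^H+\tilde{\mathbf{g}}_l\tilde{\mathbf{g}}_l^H .
\end{equation}
Under the MMSE estimator the error would be independent of all estimates, so the cross terms would vanish. For a general $\mathbf{A}_k$ this fails, so I would follow the convention the formula implicitly adopts: treat $\hat{\mathbf{g}}_l\hat{\mathbf{g}}_l^H$ as known, and replace the remaining random products by their unconditional second moments. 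These are the correlation matrices defined after \eqref{CE_Collective}, namely $\mathbb{E}\{\hat{\mathbf{g}}_l\tilde{\mathbf{g}}_l^H\}=\mathbf{B}_l$, $\mathbb{E}\{\tilde{\mathbf{g}}_l\hat{\mathbf{g}}_l^H\}=\mathbf{B}_l^H$ and $\mathbb{E}\{\tilde{\mathbf{g}}_l\tilde{\mathbf{g}}_l^H\}=\mathbf{C}_l$. In the same way, the linear term is taken as $\mathbb{E}\{\mathbf{g}_k|\hat{\mathbf{g}}_k\}\approx\hat{\mathbf{g}}_k$, using $\mathbb{E}\{\tilde{\mathbf{g}}_k\}=\mathbf{0}$ (the error is zero-mean because the LoS part is subtracted in \eqref{CE_Collective}). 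Substituting these moments into the stationary-point expression yields \eqref{eq_CMMSE}.

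\textbf{Main obstacle.} The hard step is justifying these conditional moments rigorously. Since $(\mathbf{g}_k,\hat{\mathbf{g}}_k)$ are jointly Gaussian, the exact conditional mean is $\hat{\mathbf{g}}_k+\mathbf{B}_k^H\bar{\mathbf{R}}_k^{-1}(\hat{\mathbf{g}}_k-\bar{\mathbf{g}}_k)$ (up to centering), which differs from $\hat{\mathbf{g}}_k$ unless $\mathbf{B}_k=\mathbf{0}$. The honest route is therefore to define $\mathrm{MSE}_k$ so that the estimates enter only through $\hat{\mathbf{g}}_l\hat{\mathbf{g}}_l^H$ and $\hat{\mathbf{g}}_k$, while the error-dependent parts are averaged. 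This is the standard use-and-then-forget style treatment of \cite{[162]}, extended to correlated errors.

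I would state this convention explicitly and then verify two consistency checks:
\begin{itemize}
\item With $\mathbf{A}_{mk}=\sqrt{p_k}\check{\mathbf{R}}_{mk}\mathbf{\Psi}_{mk}^{-1}$, one gets $\mathbf{B}_l=\mathbf{0}$, and \eqref{eq_CMMSE} reduces to \cite[Eq. (13)]{[162]}.
\item The resulting matrix is Hermitian positive definite, which guarantees that the stationary point is the global minimizer.
\end{itemize}
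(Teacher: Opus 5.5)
Your main derivation matches the paper's proof. You expand $\mathrm{MSE}_k$ into a quadratic in $\mathbf{v}_k$, split $\mathbf{g}_l=\hat{\mathbf{g}}_l+\tilde{\mathbf{g}}_l$, replace the linear term by $\hat{\mathbf{g}}_k$ and the estimate--error and error--error products by $\mathbf{B}_l$, $\mathbf{B}_l^H$, $\mathbf{C}_l$, and set the gradient to zero. The paper makes these substitutions without comment (``based on the results derived above''). Your remark that they are a convention rather than exact conditional moments once $\mathbf{B}_k\neq\mathbf{0}$ is correct, and it is more candid than the paper. (Minor: the Gaussian conditional mean uses the covariance $\hat{\mathbf{R}}_k$, via a pseudo-inverse if it is rank-deficient, not the correlation matrix $\bar{\mathbf{R}}_k$.)

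\textbf{The step that fails is your second consistency check.} Positive definiteness does hold for the exact matrix $\sum_l p_l\mathbb{E}\{\mathbf{g}_l\mathbf{g}_l^H|\cdot\}+\sigma^2\mathbf{I}_{MN}$, but it is not inherited after the substitution.

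\emph{The substituted matrix.} Adding the paper's expressions gives $\mathbf{B}_l+\mathbf{B}_l^H+\mathbf{C}_l=\check{\mathbf{R}}_l-\hat{\mathbf{R}}_l$. So the matrix inverted in \eqref{eq_CMMSE} is $\sum_l p_l(\hat{\mathbf{g}}_l\hat{\mathbf{g}}_l^H+\check{\mathbf{R}}_l-\hat{\mathbf{R}}_l)+\sigma^2\mathbf{I}_{MN}$. For a general $\mathbf{A}_l$, $\hat{\mathbf{R}}_l=\tau_p\mathbf{A}_l\mathbf{\Psi}_l\mathbf{A}_l^H$ need not be dominated by $\check{\mathbf{R}}_l$.

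\emph{A counterexample from the paper.} Take the paper's own GLS choice $\mathbf{A}_{mk}=\mathbf{I}_N/(\sqrt{p_k}\tau_p)$. Then $\mathbf{B}_l=-\mathbf{C}_l$ and $\sum_l p_l\mathbf{C}_l\succeq(K\sigma^2/\tau_p)\mathbf{I}_{MN}$. If $MN>K$, then on any direction orthogonal to all $\hat{\mathbf{g}}_l$ the quadratic form is at most $\sigma^2(1-K/\tau_p)$ times the squared norm. Hence the matrix is not positive definite whenever $K\ge\tau_p$, which includes the simulated setting $\tau_p=1$, $K=20$. In that case the surrogate objective has no unique minimizer: it is unbounded below when the matrix is indefinite, and the inverse does not exist when it is singular.

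\emph{What to change.} Under your convention, global minimality needs a hypothesis such as $\hat{\mathbf{R}}_l\preceq\check{\mathbf{R}}_l$. This holds for the MMSE estimator, where $\check{\mathbf{R}}_l-\hat{\mathbf{R}}_l=\mathbf{C}_l$. You should state this as an assumption rather than present positive definiteness as something to verify. The paper avoids the issue only because it never checks the second-order condition.
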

\begin{IEEEproof}
The conditional MSE can be expanded as $\mathrm{MSE}_k=\mathbb{E} \{ | x_k-\hat{x}_k |^2 |\hat{\mathbf{g}}_k \}=\mathbb{E} \{ x_kx_{k}^{H} |\hat{\mathbf{g}}_k \} -\mathbb{E} \{ x_k\hat{x}_{k}^{H} |\hat{\mathbf{g}}_k \} -\mathbb{E} \{  \hat{x}_kx_{k}^{H} |\hat{\mathbf{g}}_k \} +\mathbb{E} \{  \hat{x}_k\hat{x}_{k}^{H} |\hat{\mathbf{g}}_k \} 
=p_k-p_k\hat{\mathbf{g}}_{k}^{H}\mathbf{v}_k-p_k\mathbf{v}_{k}^{H}\hat{\mathbf{g}}_k
+\mathbf{v}_{k}^{H}[ \sum_{l=1}^K{( p_l\hat{\mathbf{g}}_l\hat{\mathbf{g}}_{l}^{H}+p_l\mathbf{B}_l+p_l\mathbf{B}_{l}^{H}+p_l\mathbf{C}_l )}+\sigma ^2\mathbf{I}_{MN} ] \mathbf{v}_k$, where $\mathbb{E} \{  \hat{x}_k\hat{x}_{k}^{H} |\hat{\mathbf{g}}_k \} =\mathbf{v}_{k}^{H}( \mathbb{E} \{  \sum_{l=1}^K{p_l( \hat{\mathbf{g}}_l+\tilde{\mathbf{g}}_l ) ( \hat{\mathbf{g}}_l+\tilde{\mathbf{g}}_l ) ^H} |\hat{\mathbf{g}}_k \} +\sigma ^2\mathbf{I}_{MN} ) \mathbf{v}_k$ with $\mathbb{E} \{  \sum_{l=1}^K{p_l( \hat{\mathbf{g}}_l+\tilde{\mathbf{g}}_l ) ( \hat{\mathbf{g}}_l+\tilde{\mathbf{g}}_l ) ^H} |\hat{\mathbf{g}}_k \} =\mathbb{E} \{  \sum_{l=1}^K{p_l\hat{\mathbf{g}}_l\hat{\mathbf{g}}_{l}^{H}} |\hat{\mathbf{g}}_k \} +\mathbb{E} \{ \sum_{l=1}^K{p_l\hat{\mathbf{g}}_l\tilde{\mathbf{g}}_{l}^{H}} |\hat{\mathbf{g}}_k \} +\mathbb{E} \{  \sum_{l=1}^K{p_l\tilde{\mathbf{g}}_l\hat{\mathbf{g}}_{l}^{H}} |\hat{\mathbf{g}}_k \} +\mathbb{E} \{  \sum_{l=1}^K{p_l\tilde{\mathbf{g}}_l\tilde{\mathbf{g}}_{l}^{H}} |\hat{\mathbf{g}}_k \} =\sum_{l=1}^K{( p_l\hat{\mathbf{g}}_l\hat{\mathbf{g}}_{l}^{H}+p_l\mathbf{B}_l+p_l\mathbf{B}_{l}^{H}+p_l\mathbf{C}_l )}$ based on the results derived above. Then, by letting $\frac{\partial \mathrm{MSE}_k}{\mathbf{v}_k}=0$, we can easily derive the MMSE combining as \eqref{eq_CMMSE} that minimizes $ \mathrm{MSE}_k$.
\end{IEEEproof}

Furthermore, we can evaluate the achievable SE performance for the centralized processing scheme. One well-studied evaluation method is to compute the achievable SE by the use-and-then-forget (UatF) capacity bound \cite{8187178} as
\begin{equation}
\mathrm{SE}_{k}^{\mathrm{c},\mathrm{UatF}}=\frac{\tau _c-\tau _p}{\tau _c}\log _2( 1+\gamma_{k}^{\mathrm{c},\mathrm{UatF}} ), 
\end{equation}
where the signal-to-interference-and-noise ratio (SINR) $\gamma_{k}^{\mathrm{c},\mathrm{UatF}}$ is
\begin{equation}\label{SINR_centrlized_UatF}
\gamma_{k}^{\mathrm{c},\mathrm{UatF}}\!\!=\!\!\frac{p_k| \mathbb{E} \{ \mathbf{v}_{k}^{H}\mathbf{g}_k \} |^2}{\sum\limits_{l=1}^K{p_l\mathbb{E} \{ | \mathbf{v}_{k}^{H}\mathbf{g}_l |^2 \}}\!-\!p_k| \mathbb{E} \{ \mathbf{v}_{k}^{H}\mathbf{g}_k \} |^2\!+\!\sigma ^2\mathbb{E} \{ \| \mathbf{v}_k \| ^2 \}}.
\end{equation}
Note that this capacity bound holds for any channel estimators. When the MMSE estimator is applied, the channel estimate and estimation error are independent, that is $\mathbf{B}_k=\mathbf{0}$. One more tight standard capacity bound \cite{8187178} can be applied to evaluate the achievable SE for UE $k$ as 
\begin{equation}
\mathrm{SE}_{k}^{\mathrm{c},\mathrm{stan}}=\frac{\tau _c-\tau _p}{\tau _c}\mathbb{E} \{ \log _2( 1+\gamma_{k}^{\mathrm{c},\mathrm{stan}} ) \} \end{equation}
with 
\begin{equation}\label{SINR_centrlized_Standard}
\begin{aligned}
\gamma_{k}^{\mathrm{c},\mathrm{stan}}=\frac{p_k\left| \mathbf{v}_{k}^{H}\hat{\mathbf{g}}_k \right|^2}{\sum\limits_{l\ne k}^K{p_l| \mathbf{v}_{k}^{H}\hat{\mathbf{g}}_l |^2}+\mathbf{v}_{k}^{H}( \sum\limits_{l=1}^K{p_l\mathbf{C}_l}+\sigma ^2\mathbf{I}_{MN} ) ^{-1}\mathbf{v}_k}.
\end{aligned}
\end{equation}
These two capacity bounds are both well applied in the CF mMIMO research. The standard capacity bound is computed based on the instantaneous information, with the prerequisite that the MMSE channel estimator is applied. On the contrary, the UatF capacity bound holds for any type of channel estimator and is computed based on the statistical information. The performance gap between these two bounds is very small, which has been evaluated in \cite[Fig. 2]{OBETrans}.

\subsubsection{Distributed Processing}\label{distributed}
Each BS showcases the capability to process the signals with the aid of multiple antennas. In this subsection, we introduce a distributed processing scheme, where the channel estimation and first layer soft data detection are implemented at each BS, and the second layer final data detection is applied at the CPU. This distributed processing scheme is motivated from CF mMIMO networks \cite{[162]}. 

More specifically, with the aid of local channel estimates, each BS can design a local combining scheme for UE $k$ $\mathbf{v}_{mk}\in \mathbb{C} ^N$ to locally implement a soft data detection as 
\begin{equation}
\begin{aligned}
\hat{x}_{mk}=\mathbf{v}_{mk}^{H}\mathbf{y}_{m}^d=\mathbf{v}_{mk}^{H}\mathbf{g}_{mk}x_k+\sum_{l\ne k}^K{\mathbf{v}_{mk}^{H}\mathbf{g}_{ml}x_l}+\mathbf{v}_{mk}^{H}\mathbf{n}_m.
\end{aligned}
\end{equation}
Many distributed combining schemes can be applied \cite{[162],OBETrans}. One widely applied combining scheme is the LMMSE scheme as \cite[Eq. (16)]{[162]}, which can minimize the local conditional MSE $\mathrm{MSE}_{mk}=\mathbb{E} \{  | x_k-\hat{x}_{mk} |^2 |\hat{\mathbf{g}}_{mk} \} $. Note that well-studied LMMSE combining formulation including \cite[Eq. (16)]{[162]} also holds only for the MMSE channel estimator. We can derive the LMMSE combining scheme, which holds for arbitrary channel estimators, as in the following corollary.
\begin{coro}\label{L-MMSE}
With the generalized channel estimator as in \eqref{generalized_CE}, the LMMSE combining scheme, which minimizes $\mathrm{MSE}_{mk}=\mathbb{E} \{  | x_k-\hat{x}_{mk} |^2 |\hat{\mathbf{g}}_{mk} \} $ is given by
\begin{equation}\label{eq_LMMSE}
\mathbf{v}_{mk}\!\!=\!\!p_k\!\!\left[ \sum_{l=1}^K{p_l( \hat{\mathbf{g}}_{ml}\hat{\mathbf{g}}_{ml}^{H}+\mathbf{B}_{ml}+\mathbf{B}_{ml}^{H}+\mathbf{C}_{ml} )}+\sigma ^2\mathbf{I}_N \right] ^{-1}\!\!\!\hat{\mathbf{g}}_{mk}.
\end{equation}
\end{coro}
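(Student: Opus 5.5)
The argument mirrors the proof of Corollary~\ref{C-MMSE}, restricted to the local quantities at BS $m$. I will expand the local conditional MSE as a quadratic form in $\mathbf{v}_{mk}$ and then set its gradient to zero. Writing $\hat{x}_{mk}=\mathbf{v}_{mk}^{H}\mathbf{y}_{m}^{d}$ and $\mathbf{g}_{ml}=\hat{\mathbf{g}}_{ml}+\tilde{\mathbf{g}}_{ml}$, we have
\begin{equation}\notag
\mathrm{MSE}_{mk}=p_k-p_k\mathbf{v}_{mk}^{H}\mathbb{E}\{\mathbf{g}_{mk}|\hat{\mathbf{g}}_{mk}\}-p_k\mathbb{E}\{\mathbf{g}_{mk}^{H}|\hat{\mathbf{g}}_{mk}\}\mathbf{v}_{mk}+\mathbf{v}_{mk}^{H}\mathbf{Q}_{mk}\mathbf{v}_{mk},
\end{equation}
where $\mathbf{Q}_{mk}\triangleq\mathbb{E}\{\sum_{l}p_l\mathbf{g}_{ml}\mathbf{g}_{ml}^{H}|\hat{\mathbf{g}}_{mk}\}+\sigma^2\mathbf{I}_N$. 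This uses the independence of the data symbols $x_l$, the noise $\mathbf{n}_m$, and the channels.

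The main step is to evaluate these conditional moments. Following exactly the convention adopted in the proof of Corollary~\ref{C-MMSE}, I treat $\hat{\mathbf{g}}_{mk}$ as the observed quantity and replace the error terms by their second-order statistics. Concretely, $\mathbb{E}\{\mathbf{g}_{mk}|\hat{\mathbf{g}}_{mk}\}=\hat{\mathbf{g}}_{mk}$, and
\begin{equation}\notag
\mathbb{E}\{\mathbf{g}_{ml}\mathbf{g}_{ml}^{H}|\hat{\mathbf{g}}_{mk}\}=\hat{\mathbf{g}}_{ml}\hat{\mathbf{g}}_{ml}^{H}+\mathbf{B}_{ml}+\mathbf{B}_{ml}^{H}+\mathbf{C}_{ml}.
\end{equation}
Here $\mathbf{B}_{ml}$ and $\mathbf{C}_{ml}$ are the local cross-correlation and error-covariance matrices computed from \eqref{generalized_CE} in Section~\ref{CE}. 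Substituting gives
\begin{equation}\notag
\mathbf{Q}_{mk}=\sum_{l=1}^K p_l(\hat{\mathbf{g}}_{ml}\hat{\mathbf{g}}_{ml}^{H}+\mathbf{B}_{ml}+\mathbf{B}_{ml}^{H}+\mathbf{C}_{ml})+\sigma^2\mathbf{I}_N.
\end{equation}

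I expect this step to be the main obstacle to full rigor. For a non-MMSE estimator, $\tilde{\mathbf{g}}_{mk}$ is correlated with $\hat{\mathbf{g}}_{mk}$, so the conditional mean of the error is in general nonzero. I will therefore state explicitly that, as in Corollary~\ref{C-MMSE}, the conditional expectations are approximated by the unconditional correlation matrices $\mathbf{B}_{ml}$ and $\mathbf{C}_{ml}$. In the MMSE special case this is exact, since then $\mathbf{B}_{ml}=\mathbf{0}$ and the error is independent of the estimate.

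Finally, $\mathbf{Q}_{mk}$ is Hermitian and positive definite because of the $\sigma^2\mathbf{I}_N$ term, so $\mathrm{MSE}_{mk}$ is strictly convex in $\mathbf{v}_{mk}$. Setting the Wirtinger derivative with respect to $\mathbf{v}_{mk}^{*}$ to zero gives $\mathbf{Q}_{mk}\mathbf{v}_{mk}=p_k\hat{\mathbf{g}}_{mk}$. The unique minimizer is therefore $\mathbf{v}_{mk}=p_k\mathbf{Q}_{mk}^{-1}\hat{\mathbf{g}}_{mk}$, which is \eqref{eq_LMMSE}.
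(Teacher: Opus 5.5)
Your route is the paper's: run the argument of Corollary~\ref{C-MMSE} at BS $m$, replace the conditional second moments by $\hat{\mathbf{g}}_{ml}\hat{\mathbf{g}}_{ml}^{H}+\mathbf{B}_{ml}+\mathbf{B}_{ml}^{H}+\mathbf{C}_{ml}$, and set the gradient to zero. Your caveat that this substitution is a convention, exact only for the MMSE estimator, is accurate and more candid than the paper. Strictly, even in the MMSE case the identity requires conditioning on all local estimates $\{\hat{\mathbf{g}}_{ml}\}_{l=1}^{K}$ rather than on $\hat{\mathbf{g}}_{mk}$ alone; the paper uses the same shorthand.

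The step that fails is the one you added: $\mathbf{Q}_{mk}\succ\mathbf{0}$ does not follow from the $\sigma^2\mathbf{I}_N$ term. Because $\mathbb{E}\{\mathbf{g}_{ml}\mathbf{g}_{ml}^{H}\}=\bar{\mathbf{R}}_{ml}+\mathbf{B}_{ml}+\mathbf{B}_{ml}^{H}+\mathbf{C}_{ml}$, you have $\mathbf{B}_{ml}+\mathbf{B}_{ml}^{H}+\mathbf{C}_{ml}=\check{\mathbf{R}}_{ml}-\hat{\mathbf{R}}_{ml}$. This equals $\mathbf{C}_{ml}\succeq\mathbf{0}$ for the MMSE estimator but is indefinite in general.

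Take the GLS choice $\mathbf{A}_{ml}=\mathbf{I}_N/(\sqrt{p_l}\tau_p)$ used in Fig.~\ref{FIG_3_Estimator}. Then $p_l\hat{\mathbf{R}}_{ml}=\mathbf{\Psi}_{ml}/\tau_p\succeq p_l\check{\mathbf{R}}_{ml}+(\sigma^2/\tau_p)\mathbf{I}_N$, hence $\sum_{l}p_l(\mathbf{B}_{ml}+\mathbf{B}_{ml}^{H}+\mathbf{C}_{ml})+\sigma^2\mathbf{I}_N\preceq\sigma^2(1-K/\tau_p)\mathbf{I}_N$. Suppose $K>\tau_p$ (e.g.\ the paper's $\tau_p=1$, $K=20$) and $N>K$. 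Then any unit vector $\mathbf{x}$ orthogonal to $\hat{\mathbf{g}}_{m1},\dots,\hat{\mathbf{g}}_{mK}$ gives $\mathbf{x}^{H}\mathbf{Q}_{mk}\mathbf{x}<0$. Your surrogate MSE is therefore unbounded below along $\mathbf{x}$, and \eqref{eq_LMMSE} is merely a stationary point, not a minimizer. For $K=\tau_p$ without pilot sharing, $\mathbf{Q}_{mk}=\sum_l p_l\hat{\mathbf{g}}_{ml}\hat{\mathbf{g}}_{ml}^{H}$ is even singular, so the inverse does not exist.

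The root cause is that the true conditional second moment is automatically positive semidefinite. Your surrogate only matches it on average, so positivity is not inherited. To repair the argument, either add the hypothesis $\mathbf{Q}_{mk}\succ\mathbf{0}$, or claim only stationarity. The hypothesis is guaranteed, for example, for the MMSE estimator or whenever $\hat{\mathbf{R}}_{ml}\preceq\check{\mathbf{R}}_{ml}$ for all $l$. The paper's own proof simply sets the derivative to zero and is silent on this. The word ``minimizes'' in the statement therefore carries the same implicit assumption.
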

\begin{IEEEproof}
This result can be easily proved based on the method in Corollary~\ref{C-MMSE}.
\end{IEEEproof}
Then, all BSs send their local data estimates to the CPU, where the final decoding can be implemented at the CPU with the aid of the LSFD weight coefficients  $\{ a_{mk}:m=1,\dots ,M \} $ as $\check{x}_k=\sum_{m=1}^M{a_{mk}^{*}\hat{x}_{mk}}$. And we can compute the achievable SE for UE $k$ based on the UatF capacity bound as
\begin{equation}
\mathrm{SE}_{k}^{\mathrm{d}}=\frac{\tau _c-\tau _p}{\tau _c}\log _2(1+\gamma_{k}^{\mathrm{d}}),
\end{equation}
with
\begin{equation}
\begin{aligned}\label{SINR_LSFD}
&\gamma_{k}^{\mathrm{d}}=\\
&\frac{p_k\left| \mathbf{a}_{k}^{H}\mathbb{E} \left\{ \mathbf{b}_{kk} \right\} \right|^2}{\mathbf{a}_{k}^{H}\!(\sum_{l=1}^K{p_l\mathbb{E} \{ \mathbf{b}_{kl}\mathbf{b}_{kl}^{H} \}}\!-\!p_k\mathbb{E} \left\{ \mathbf{b}_{kk} \right\} \mathbb{E} \left\{ \mathbf{b}_{kk}^{H} \right\} +\sigma ^2\mathbf{D}_k)\mathbf{a}_k},
\end{aligned}
\end{equation}
where $\mathbf{b}_{kl}=[ \mathbf{v}_{1k}^{H}\mathbf{g}_{1l},\dots ,\mathbf{v}_{Mk}^{H}\mathbf{g}_{Ml} ] ^T\in \mathbb{C} ^M$, $\mathbf{a}_k=[ a_{1k},\dots ,a_{Mk} ] \in \mathbb{C} ^M$, and $\mathbf{D}_k=\mathrm{diag}[ \mathbb{E} \{ \| \mathbf{v}_{1k} \| ^2 \} ,\dots ,\mathbb{E} \{ \| \mathbf{v}_{Mk} \| ^2 \} ] \in \mathbb{C} ^{M\times M}$.
More importantly, \eqref{SINR_LSFD} can be maximized by 
$\mathbf{a}_{k}^{*}=( \sum_{l=1}^K{p_l\mathbb{E} \{ \mathbf{b}_{kl}\mathbf{b}_{kl}^{H} \}}-p_k\mathbb{E} \{ \mathbf{b}_{kk} \} \mathbb{E} \{ \mathbf{b}_{kk}^{H}\} +\sigma ^2\mathbf{D}_k ) ^{-1}\mathbb{E} \{ \mathbf{b}_{kk} \} $ with its maximum value $\gamma_{k}^{\mathrm{d},*}=p_k\mathbb{E} \{ \mathbf{b}_{kk}^{H} \} \mathbf{a}_{k}^{*}$ based on \cite[Corollary 2]{[162]}.

\section{Matrix Approximation Aid Distributed Combining Design}\label{Sec_AA}
In our studied CF XL-MIMO networks, the BS is usually equipped with a large number of antennas. Thus, some matrix approximation methodologies can be implemented to approximate some components in MMSE-based combining schemes introduced above to derive some efficient low-complexity distributed combining schemes. In this section, we introduce two matrix approximation aid distributed combining schemes.

\subsection{GSLI-MMSE Combining Scheme}
Firstly, we study a global statistics \& local instantaneous information (GSLI)-based MMSE combining scheme. We start from the CMMSE combining as in \eqref{eq_CMMSE}. Note that \eqref{eq_CMMSE} can be further constructed as 
\begin{equation}\label{CMMSE_Construct}
\mathbf{v}_k=p_k\left( \hat{\mathbf{G}}\mathbf{P}\hat{\mathbf{G}}^H+\mathbf{Q} \right) ^{-1}\hat{\mathbf{G}}\mathbf{e}_k,
\end{equation}
where $\hat{\mathbf{G}}\triangleq [ \hat{\mathbf{g}}_1,\dots ,\hat{\mathbf{g}}_K ] \in \mathbb{C} ^{MN\times K}$, $\mathbf{Q}=\sum_{l=1}^K{p_l( \mathbf{B}_l+\mathbf{B}_{l}^{H}+\mathbf{C}_l )}+\sigma ^2\mathbf{I}_{MN}$, $\mathbf{P}=\mathrm{diag}( p_1,\dots ,p_K ) \in \mathbb{C} ^{K\times K}$, and $\mathbf{e}_k$ is the $k$-th column of $\mathbf{I}_K$. Following the matrix inversion method in Lemma~\ref{matrixinversion}, we can further formulate \eqref{CMMSE_Construct} as 
\begin{equation}\label{CMMSE_Construct_inverse}
\mathbf{v}_k=p_k\mathbf{Q}^{-1}\hat{\mathbf{G}}\left( \hat{\mathbf{G}}^H\mathbf{Q}^{-1}\hat{\mathbf{G}}+\mathbf{P}^{-1} \right) ^{-1}\mathbf{P}^{-1}\mathbf{e}_k,
\end{equation}
by letting the matrices in Lemma~\ref{matrixinversion} be $\mathbf{A}=\mathbf{Q}$, $\mathbf{B}=\hat{\mathbf{G}}$, and $\mathbf{C}=\mathbf{P}$, respectively. Note that $\mathbf{v}_k$ can be further represented as $\mathbf{v}_k=[ \mathbf{v}_{1k}^{T},\dots ,\mathbf{v}_{Mk}^{T} ] ^T$ with $\mathbf{v}_{mk} \in \mathbb{C} ^{N} $. Relying on the block diagonal  characteristic of $\mathbf{B}_k$ and $\mathbf{C}_k$, we can also represent $\mathbf{Q}$ in the block diagonal form as $\mathbf{Q}=\mathrm{diag}[ \mathbf{Q}_1,\dots ,\mathbf{Q}_M ] $ with $\mathbf{Q}_m=\sum_{l=1}^K{p_l( \mathbf{B}_{ml}+\mathbf{B}_{ml}^{H}+\mathbf{C}_{ml} )}+\sigma ^2\mathbf{I}_N\in \mathbb{C} ^{N\times N}$. With the aid of the block diagonal  characteristic of $\mathbf{Q}$, we can further construct $\mathbf{v}_{mk}$ as  
\begin{equation}\label{vmk}
\mathbf{v}_{mk}=p_k\mathbf{Q}_{m}^{-1}\hat{\mathbf{G}}_m\left( \hat{\mathbf{G}}^H\mathbf{Q}^{-1}\hat{\mathbf{G}}+\mathbf{P}^{-1} \right) ^{-1}\mathbf{e}_k.
\end{equation}
where $\hat{\mathbf{G}}_m=\left[ \hat{\mathbf{g}}_{m1},\dots ,\hat{\mathbf{g}}_{mK} \right] \in \mathbb{C} ^{N\times K}$ denotes the local channel estimates at BS $m$.
Ideally, $\mathbf{v}_{mk}$ can be applied at each AP $m$ in a distributed manner. However, the direct implementation of \eqref{vmk} at each BS requires global instantaneous channel state information (CSI), which needs significant signaling transmission burden from other BSs via fronthaul links. To facilitate the practical 
implementation \eqref{vmk}, we apply the asymptotic analysis to approximate some terms in \eqref{vmk} by the statistical information and derive a GSLI-MMSE combining scheme (we call this scheme as GSLI-MMSE since it is obtained from the MMSE combining scheme) as in the following corollary.
\begin{coro}\label{GSLI}
An efficient distributed GSLI-MMSE combining scheme at BS $m$ for UE $k$ based on global statistics \& local instantaneous information is represented as
\begin{equation}\label{eqGSLI}
\mathbf{v}_{mk}=\frac{p_k}{MN}\mathbf{Q}_{m}^{-1}\hat{\mathbf{G}}_m\left( \mathbf{\Sigma }+\frac{1}{MN}\mathbf{P}^{-1} \right) ^{-1}\mathbf{e}_k,
\end{equation}
where the $(k,l)$-th element of $\mathbf{\Sigma }\in \mathbb{C} ^{K\times K}$ is
\begin{align}
\left[ \mathbf{\Sigma } \right] _{kl}\!=\!\begin{cases}
	\frac{1}{MN}\bar{\mathbf{g}}_{k}^{H}\mathbf{Q}^{-1}\bar{\mathbf{g}}_l, \,\,\, \mathrm{if} \,\,l\notin \mathcal{P} _k\,\,\\
	\frac{1}{MN}\bar{\mathbf{g}}_{k}^{H}\mathbf{Q}^{-1}\bar{\mathbf{g}}_l\!+\!\frac{1}{MN}\tau _p\mathrm{tr}(\mathbf{A}_l\mathbf{\Psi }_k\mathbf{A}_{k}^{H}\mathbf{Q}^{-1}). \, \mathrm{if} \,\, l\in \mathcal{P} _k\\
\end{cases}
\end{align}
\end{coro}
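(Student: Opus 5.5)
We start from the exact distributed form \eqref{vmk}. The only term requiring global instantaneous CSI is the $K\times K$ Gram-type matrix $\hat{\mathbf{G}}^H\mathbf{Q}^{-1}\hat{\mathbf{G}}$. The factor $\mathbf{Q}_m^{-1}\hat{\mathbf{G}}_m$ uses only local instantaneous estimates and statistics, so it is already available at BS $m$. The plan is to pull out a factor $MN$ and write
\begin{equation}\notag
\left(\hat{\mathbf{G}}^H\mathbf{Q}^{-1}\hat{\mathbf{G}}+\mathbf{P}^{-1}\right)^{-1}=\frac{1}{MN}\left(\frac{1}{MN}\hat{\mathbf{G}}^H\mathbf{Q}^{-1}\hat{\mathbf{G}}+\frac{1}{MN}\mathbf{P}^{-1}\right)^{-1}.
\end{equation}
We then replace the normalized Gram matrix $\frac{1}{MN}\hat{\mathbf{G}}^H\mathbf{Q}^{-1}\hat{\mathbf{G}}$ by its deterministic equivalent $\mathbf{\Sigma}$, justified by the large number $N$ of antennas per BS. Substituting back gives \eqref{eqGSLI} directly, including the prefactor $p_k/(MN)$.

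The core step is computing the entrywise limit $\frac{1}{MN}\hat{\mathbf{g}}_k^H\mathbf{Q}^{-1}\hat{\mathbf{g}}_l - [\mathbf{\Sigma}]_{kl}\to 0$. Using \eqref{CE_Collective}, we write $\hat{\mathbf{g}}_k=\bar{\mathbf{g}}_k+\mathbf{A}_k\mathbf{w}_k$ with $\mathbf{w}_k\triangleq\mathbf{y}_k^p-\bar{\mathbf{y}}_k^p$. Here $\mathbf{w}_k$ is zero-mean Gaussian with $\mathbb{E}\{\mathbf{w}_k\mathbf{w}_k^H\}=\tau_p\mathbf{\Psi}_k$. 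For $l\in\mathcal{P}_k$ we have $\mathbf{w}_l=\mathbf{w}_k$, since both UEs see the same projected pilot signal. For $l\notin\mathcal{P}_k$, $\mathbf{w}_k$ and $\mathbf{w}_l$ are independent, because orthogonal pilots separate the NLoS components and the noise projections. Expanding the quadratic form gives four terms:
\begin{itemize}
\item the deterministic LoS term $\frac{1}{MN}\bar{\mathbf{g}}_k^H\mathbf{Q}^{-1}\bar{\mathbf{g}}_l$;
\item two cross terms of the form $\frac{1}{MN}\bar{\mathbf{g}}_k^H\mathbf{Q}^{-1}\mathbf{A}_l\mathbf{w}_l$;
\item the random quadratic term $\frac{1}{MN}\mathbf{w}_k^H\mathbf{A}_k^H\mathbf{Q}^{-1}\mathbf{A}_l\mathbf{w}_l$.
\end{itemize}
The cross terms are linear in a zero-mean Gaussian vector. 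Their variance is $O(1/(MN))$ provided the LoS vectors and the matrices $\mathbf{A}_l$, $\mathbf{Q}^{-1}$ have bounded norms per antenna, so they vanish almost surely (or in mean square) by a standard Chebyshev/Borel–Cantelli argument.

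For the quadratic term we invoke the trace lemma for Gaussian quadratic forms. For $l\in\mathcal{P}_k$ it yields $\frac{1}{MN}\mathbf{w}_k^H\mathbf{A}_k^H\mathbf{Q}^{-1}\mathbf{A}_l\mathbf{w}_k-\frac{\tau_p}{MN}\mathrm{tr}(\mathbf{A}_l\mathbf{\Psi}_k\mathbf{A}_k^H\mathbf{Q}^{-1})\to0$, using the cyclic property of the trace to reach the stated ordering. For $l\notin\mathcal{P}_k$ the independent quadratic-form lemma makes the term vanish. Together these give exactly the two cases defining $[\mathbf{\Sigma}]_{kl}$.

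Finally we argue that the inverse is continuous. The term $\frac{1}{MN}\mathbf{P}^{-1}$ is positive definite and the Gram part is positive semidefinite, so the eigenvalues of both the true and the approximating matrices are bounded away from zero for fixed $MN$. Entrywise convergence of a fixed-size $K\times K$ matrix therefore carries over to its inverse, which justifies the substitution.

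The main obstacle is rigor in the asymptotic regime. The trace lemma needs uniformly bounded spectral norms of $\mathbf{A}_k^H\mathbf{Q}^{-1}\mathbf{A}_l$ and $\mathbf{\Psi}_k$. It also needs the term $\frac{1}{MN}\mathbf{P}^{-1}$, which itself vanishes as $MN\to\infty$, to be retained so the approximation stays well conditioned. I would state these as standard technical assumptions, namely bounded normalized correlation and LoS energy. Under them the result holds as an asymptotically motivated approximation rather than an exact identity, which is consistent with its role as a low-complexity scheme.
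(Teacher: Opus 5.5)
Your proposal is correct and follows essentially the same route as the paper's proof. You pull out $MN$ and split $\hat{\mathbf{g}}_k=\bar{\mathbf{g}}_k+\mathbf{A}_k(\mathbf{y}_k^p-\bar{\mathbf{y}}_k^p)$. You then discard the LoS--random cross terms and, for $l\notin\mathcal{P}_k$, the independent quadratic term via \eqref{asy2}, replace the pilot-sharing quadratic term by $\frac{\tau_p}{MN}\mathrm{tr}(\mathbf{A}_l\mathbf{\Psi}_k\mathbf{A}_k^H\mathbf{Q}^{-1})$ via \eqref{asy1}, and substitute entrywise into \eqref{vmk}.

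Your extra continuity-of-the-inverse step goes beyond the paper, which simply substitutes. As written, that step is not uniform in $MN$, since $\lambda_{\min}(\frac{1}{MN}\mathbf{P}^{-1})=\frac{1}{MN\max_l p_l}\to 0$. Making it rigorous would need an assumption such as $\liminf\lambda_{\min}(\mathbf{\Sigma})>0$. Your positive-semidefiniteness claim also needs $\mathbf{Q}\succ\mathbf{0}$. That holds for the MMSE estimator, where $\mathbf{B}_l=\mathbf{0}$, but not for arbitrary $\mathbf{A}_k$, because $\mathbf{B}_l+\mathbf{B}_l^H+\mathbf{C}_l=\check{\mathbf{R}}_l-\hat{\mathbf{R}}_l$ need not be positive semidefinite.
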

\begin{IEEEproof}
We first formulate \eqref{vmk} as 
\begin{equation}
\mathbf{v}_{mk}=\frac{p_k}{MN}\mathbf{Q}_{m}^{-1}\hat{\mathbf{G}}_m( \frac{1}{MN}\hat{\mathbf{G}}^H\mathbf{Q}^{-1}\hat{\mathbf{G}}+\frac{1}{MN}\mathbf{P}^{-1} ) ^{-1}\mathbf{e}_k.
\end{equation} 
Then, we focus on the term $\frac{1}{MN}\hat{\mathbf{G}}^H\mathbf{Q}^{-1}\hat{\mathbf{G}}\in \mathbb{C} ^{K\times K}$. Note that the $(k,l)$-th element of this term is
\begin{equation}\label{vitalterm}
\left[ \frac{1}{MN}\hat{\mathbf{G}}^H\mathbf{Q}^{-1}\hat{\mathbf{G}} \right] _{kl}=\frac{1}{MN}\hat{\mathbf{g}}_{k}^{H}\mathbf{Q}^{-1}\hat{\mathbf{g}}_l.
\end{equation}
For $l\notin \mathcal{P} _k$, we have
\begin{equation}\label{asymptoticreuslts1}
\begin{aligned}
&\frac{1}{MN}\hat{\mathbf{g}}_{k}^{H}\mathbf{Q}^{-1}\hat{\mathbf{g}}_l=\frac{1}{MN}( \bar{\mathbf{g}}_{k}^{H}+\hat{\mathbf{g}}_{k,\left( 1 \right)}^{H} ) \mathbf{Q}^{-1}( \bar{\mathbf{g}}_l+\hat{\mathbf{g}}_{l,\left( 1 \right)} ) \\
&\overset{\left( a \right)}{\asymp}\frac{1}{MN}\bar{\mathbf{g}}_{k}^{H}\mathbf{Q}^{-1}\bar{\mathbf{g}}_l,
\end{aligned}
\end{equation}
where step (a) follows from the standard asymptotic analysis result as \eqref{asy2} in Lemma~\ref{asymptotic} at the Appendix since $\hat{\mathbf{g}}_{k,\left( 1 \right)} $ and $\hat{\mathbf{g}}_{l,\left( 1 \right)} $ are mutually independent with $\hat{\mathbf{g}}_{k,\left( 1 \right)}=\mathbf{A}_k(\mathbf{y}_{k}^{p}-\bar{\mathbf{y}}_{k}^{p})$  and $\bar{\mathbf{g}}_{k}^{H}\mathbf{Q}^{-1}\hat{\mathbf{g}}_{l,\left( 1 \right)}\asymp 0$ \cite{sanguinetti2018theoretical}. For $l\in \mathcal{P} _k$, $\hat{\mathbf{g}}_{k,\left( 1 \right)}$ and $\hat{\mathbf{g}}_{l,\left( 1 \right)}$ are correlated. Following the standard result as \eqref{asy1} in Lemma~\ref{asymptotic} at the Appendix, we have 
\begin{equation}\label{asymptoticreuslts2}
\frac{1}{MN}\hat{\mathbf{g}}_{k}^{H}\mathbf{Q}^{-1}\hat{\mathbf{g}}_l\asymp \frac{1}{MN}[\bar{\mathbf{g}}_{k}^{H}\mathbf{Q}^{-1}\bar{\mathbf{g}}_l+\tau _p\mathrm{tr}\left( \mathbf{A}_l\mathbf{\Psi }_k\mathbf{A}_{k}^{H}\mathbf{Q}^{-1} \right)].
\end{equation}
Based on these derived results, we can easily derive the combining scheme as in \eqref{eqGSLI} by approximating $\frac{1}{MN}\hat{\mathbf{G}}^H\mathbf{Q}^{-1}\hat{\mathbf{G}}\in \mathbb{C} ^{K\times K}$ based on the asymptotic analysis-aided results as in \eqref{asymptoticreuslts1} and \eqref{asymptoticreuslts2}.
\end{IEEEproof}

\begin{rem}
As observed in Corollary~\ref{GSLI}, the combining scheme in \eqref{eqGSLI} is an asymptotic analysis-aided form of \eqref{vmk}, which can be computed based on the global statistics information, represented by $\mathbf{\Sigma }$, and local instantaneous information, represented by $\hat{\mathbf{G}}_m$. It is feasible to implement the combining scheme as in \eqref{eqGSLI} at each BS in a distributed manner, since only the global statistics, instead of global instantaneous, information is required in \eqref{eqGSLI}. Note that the channel statistics remain constant over a long period of time. Thus, to implement \eqref{eqGSLI} in a distributed manner, the global statistics information needs to be transmitted to each BS by only one time in each realization of the BS/UE locations. Then, each BS can locally design the GSLI-MMSE combining scheme based on Corollary~\ref{GSLI} with the aid of received global statistical information.
\end{rem}

\begin{rem}
Based on a similar idea to Corollary~\ref{GSLI}, a distributed combining scheme called reduced-complexity MMSE (RC-MMSE) combining is studied in \cite{polegre2021pilot} for the conventional CF mMIMO network. It is worth noting that the RC-MMSE combining in \cite{polegre2021pilot} only holds for the scenario with single-antenna BSs over the Rayleigh fading channel model, only with the NLoS component. Even if the MMSE channel estimator is applied, the GSLI-MMSE combining in this paper is still a more generalized version of the RC-MMSE combining, where the multiple-antenna BSs and the near-field spherical wave channel model with both the LoS and NLoS components are considered. These generalizations involve the matrix-level derivation methodology and more complicated channel characteristics, which require a different derivation method and are not a straightforward expansion of \cite{polegre2021pilot}. Note that the GSLI-MMSE combining in Corollary~\ref{GSLI} can reduce to the RC-MMSE combining as in \cite[Eq. (8)]{polegre2021pilot} by letting $\bar{\mathbf{g}}_{k}=\mathbf{0}$ and $N=1$.
\end{rem}

\subsection{SI-LMMSE Combining Scheme}
Note that we apply the asymptotic analysis method in Corollary~\ref{GSLI} to approximate the global instantaneous information-based terms. In this part, we propose a Statistics matrix Inversion-based LMMSE (SI-LMMSE) combining scheme. More specifically, we start from the LMMSE scheme as in \eqref{L-MMSE}. As observed in \eqref{L-MMSE}, only the term $\hat{\mathbf{g}}_{ml}\hat{\mathbf{g}}_{ml}^{H}$ in the matrix inversion part is based on the local instantaneous information, and the other terms are all based on the local statistics information. Notably, by approximating the term $\hat{\mathbf{g}}_{ml}\hat{\mathbf{g}}_{ml}^{H}$ by its long-term second-order expectation, we can derive the SI-LMMSE combining scheme as in the following corollary.

\begin{coro}\label{SL-MMSE}
By approximating $\hat{\mathbf{g}}_{ml}\hat{\mathbf{g}}_{ml}^{H}$ by its long-term second-order expectation, we derive the SI-LMMSE combining scheme as
\begin{equation}\label{eqSLMMSE}
\mathbf{v}_{mk}\!=\!p_k\!\left[ \sum_{l=1}^K{p_l\left( \bar{\mathbf{R}}_{ml}+\mathbf{B}_{ml}+\mathbf{B}_{ml}^{H}+\mathbf{C}_{ml} \right)}+\sigma ^2\mathbf{I}_N \right] ^{-1}\!\!\hat{\mathbf{g}}_{mk},
\end{equation}
where $\bar{\mathbf{R}}_{ml}\triangleq \mathbb{E} \left\{ \hat{\mathbf{g}}_{ml}\hat{\mathbf{g}}_{ml}^{H} \right\} =\bar{\mathbf{g}}_{ml}\bar{\mathbf{g}}_{ml}^{H}+\hat{\mathbf{R}}_{ml}$ is the expectation of $\hat{\mathbf{g}}_{ml}\hat{\mathbf{g}}_{ml}^{H}$.
\end{coro}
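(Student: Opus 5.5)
Corollary~\ref{SL-MMSE} is essentially a substitution result, so the proof amounts to starting from the generalized LMMSE combiner in Corollary~\ref{L-MMSE}, replacing the single instantaneous term by its expectation, and checking that the remaining terms are already statistics-based. Concretely, I would first recall \eqref{eq_LMMSE} and split the matrix under the inverse as
\begin{equation}\notag
\sum_{l=1}^K p_l\hat{\mathbf{g}}_{ml}\hat{\mathbf{g}}_{ml}^{H}+\mathbf{Q}_m,\qquad \mathbf{Q}_m=\sum_{l=1}^K{p_l( \mathbf{B}_{ml}+\mathbf{B}_{ml}^{H}+\mathbf{C}_{ml} )}+\sigma ^2\mathbf{I}_N .
\end{equation}
From the expressions of $\mathbf{B}_{ml}$ and $\mathbf{C}_{ml}$ derived after \eqref{generalized_CE}, these matrices depend only on $\mathbf{A}_{ml}$, $\check{\mathbf{R}}_{ml}$ and $\mathbf{\Psi}_{ml}$. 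Hence $\mathbf{Q}_m$ is purely local statistical information, and the only instantaneous part of the inverse is $\hat{\mathbf{g}}_{ml}\hat{\mathbf{g}}_{ml}^{H}$.

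Next I would compute $\bar{\mathbf{R}}_{ml}=\mathbb{E}\{\hat{\mathbf{g}}_{ml}\hat{\mathbf{g}}_{ml}^{H}\}$ from \eqref{generalized_CE}. Write $\hat{\mathbf{g}}_{ml}=\bar{\mathbf{g}}_{ml}+\mathbf{A}_{ml}(\mathbf{y}_{ml}^{p}-\bar{\mathbf{y}}_{ml}^{p})$. The second term is zero mean, because $\mathbf{y}_{ml}^{p}-\bar{\mathbf{y}}_{ml}^{p}=\sum_{i\in\mathcal{P}_l}\sqrt{p_i}\tau_p\check{\mathbf{g}}_{mi}+\mathbf{n}_{ml}^{p}$ with zero-mean NLoS components and noise. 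The cross terms therefore vanish, and the covariance of this term is $\tau_p\mathbf{A}_{ml}\mathbf{\Psi}_{ml}\mathbf{A}_{ml}^{H}=\hat{\mathbf{R}}_{ml}$ by the definition of $\mathbf{\Psi}_{ml}$. This gives $\bar{\mathbf{R}}_{ml}=\bar{\mathbf{g}}_{ml}\bar{\mathbf{g}}_{ml}^{H}+\hat{\mathbf{R}}_{ml}$, as stated.

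Replacing each $\hat{\mathbf{g}}_{ml}\hat{\mathbf{g}}_{ml}^{H}$ in \eqref{eq_LMMSE} by $\bar{\mathbf{R}}_{ml}$ gives \eqref{eqSLMMSE} directly, while the instantaneous factor $\hat{\mathbf{g}}_{mk}$ outside the inverse is kept.

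The one point needing care is that the resulting matrix is invertible. Here I would note that $\sum_l p_l(\bar{\mathbf{R}}_{ml}+\mathbf{B}_{ml}+\mathbf{B}_{ml}^{H}+\mathbf{C}_{ml})$ equals $\sum_l p_l\mathbb{E}\{\mathbf{g}_{ml}\mathbf{g}_{ml}^{H}\}$. This follows from $\mathbf{g}_{ml}=\hat{\mathbf{g}}_{ml}+\tilde{\mathbf{g}}_{ml}$ together with the definitions of $\mathbf{B}_{ml}$ and $\mathbf{C}_{ml}$. The sum is therefore positive semidefinite, and adding $\sigma^2\mathbf{I}_N$ makes the matrix positive definite, so the inverse exists.

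It is also worth remarking that the approximation can be read as minimizing the local MSE with the conditional expectation replaced by the unconditional one. Under that reading the same derivative argument as in Corollary~\ref{C-MMSE} yields \eqref{eqSLMMSE} as an exact optimizer of the surrogate objective. Finally, as a sanity check, when the MMSE estimator is used ($\mathbf{B}_{ml}=\mathbf{0}$) the expression reduces to the familiar statistics-inverse form.
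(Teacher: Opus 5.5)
Your proposal is correct and takes the same route as the paper. The paper gives no separate proof: it notes that $\hat{\mathbf{g}}_{ml}\hat{\mathbf{g}}_{ml}^{H}$ is the only instantaneous term inside the inverse in \eqref{eq_LMMSE} and replaces it by $\bar{\mathbf{R}}_{ml}=\bar{\mathbf{g}}_{ml}\bar{\mathbf{g}}_{ml}^{H}+\hat{\mathbf{R}}_{ml}$ from Section~\ref{CE}. Your explicit computation of $\bar{\mathbf{R}}_{ml}$, and your observation that the bracket equals $\sum_{l}p_l\mathbb{E}\{\mathbf{g}_{ml}\mathbf{g}_{ml}^{H}\}+\sigma^2\mathbf{I}_N$ and is therefore positive definite, are valid details the paper leaves unstated.

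One caution concerns your surrogate-MSE reading. It reproduces \eqref{eqSLMMSE} only if you average just the quadratic term $\mathbf{v}_{mk}^{H}(\cdot)\mathbf{v}_{mk}$; if you also replace the cross terms by unconditional expectations, the minimizer has $\bar{\mathbf{g}}_{mk}$ outside the inverse instead of $\hat{\mathbf{g}}_{mk}$.
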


\begin{rem}
This combining scheme is mainly motivated from the single-cell MMSE (S-MMSE) combining scheme introduced in \cite[Eq. (4.8)]{8187178}, where some terms are replaced by their expectations. We can observe from Corollary~\ref{SL-MMSE} that the SI-LMMSE combining scheme can be designed based on the local channel information at each BS. More importantly, the SI-LMMSE is designed with the aid of the statistics-based matrix inversion, which embraces much lower computational complexity than that of the instantaneous information-based matrix inversion in the LMMSE combining scheme.
\end{rem}

\begin{rem}
Note that \eqref{eqSLMMSE} is a distributed combining scheme, which is derived from the LMMSE combining scheme as in \eqref{eq_LMMSE}. Notably, following the similar approach, a Statistics matrix inversion-based Centralized MMSE (SI-CMMSE) combining scheme can also be derived via the CMMSE as in \eqref{eq_CMMSE} by approximating $\hat{\mathbf{g}}_l\hat{\mathbf{g}}_{l}^{H}$ in \eqref{eq_CMMSE} as its expectation $\bar{\mathbf{g}}_l\bar{\mathbf{g}}_{l}^{H}+\hat{\mathbf{R}}_l\triangleq \bar{\mathbf{R}}_l$.
\end{rem}

\begin{algorithm}[t]
\label{Ins-SSOR}
\caption{Ins-SSOR-LMMSE combining scheme}
\KwIn{$\mathbf{A}_m^{\mathrm{Ins}}$ given in \eqref{A_Ins} and $\mathbf{b}_{mk}=p_k\hat{\mathbf{g}}_{mk}$ for all AP-UE pairs; the number of algorithm iterations $N_{Iter}$; relaxation parameter $\omega$ computed in \eqref{relaxation_para};}
\KwOut{Ins-SSOR-LMMSE combining scheme $\mathbf{v}_{mk}^{\mathrm{Ins}-\mathrm{SSOR}}$ for all AP-UE pairs;}
{\bf Initiation:} $i=0$, $\mathbf{v}_{mk}^{\mathrm{Ins}-\mathrm{SSOR},\left( 0 \right)}=\mathbf{0}$;\\

\Repeat(){ $i\geqslant N_{Iter}$}
{
$i=i+1$\\
Decompose $\mathbf{A}_m^{\mathrm{Ins}}$ based on \eqref{A_SSOR};\\
Update the first half iteration result $\mathbf{v}_{mk}^{\mathrm{Ins}-\mathrm{SSOR},( i+\frac{1}{2} )}$ via \eqref{SSOR_1};\\
Update the second half iteration results $\mathbf{v}_{mk}^{\mathrm{Ins}-\mathrm{SSOR},\left( i \right)}$ via \eqref{SSOR_2};\\
}
\end{algorithm}

\section{SSOR Aid Distributed Combining Design}\label{Sec_SSOR}
We notice that the LMMSE combining as in \eqref{L-MMSE} involves $N \times N$ instantaneous information-based matrix inversion with the computational complexity of $\mathcal{O} ( N^3 ) $ computation complexity, which is of high computational complexity when $N$ is large. To facilitate the practical implementation of this scheme, we apply a low-complexity SSOR algorithm to solve the computationally demanding matrix inversion.
\subsection{Fundamentals of the SSOR Algorithm}\label{SSOR_Fun}
In this part, we first introduce the basic fundamentals of the SSOR algorithm. We consider an SLE as $\mathbf{Ax}=\mathbf{b}$, where $\mathbf{A}\in \mathbb{C} ^{n_1\times n_1}$ is a Hermitian positive definite matrix and $\mathbf{x},\mathbf{b}\in \mathbb{C} ^{n_1}$. The fundamentals of the SSOR algorithm to solve this SLE are detailed as follows.

\begin{enumerate}

\item First, $\textbf{A}$ is decomposed as 
\begin{equation}\label{A_SSOR}
\mathbf{A}=\mathbf{D}+\mathbf{L}+\mathbf{L}^H,
\end{equation}
where $\mathbf{D}$ is the diagonal component of $\mathbf{A}$, $\mathbf{L}$ and $\mathbf{L}^H$ represent the strictly lower and upper triangular component of $\mathbf{A}$, respectively.

\item Then, we apply the successive over relaxation (SOR) method within the forward order to derive the first half iteration result as 
\begin{equation}\label{SSOR_1}
(\mathbf{D}+\omega \mathbf{L})\mathbf{x}^{(i+1/2)}=(1-\omega )\mathbf{Dx}^{(i)}-\omega \mathbf{L}^H\mathbf{x}^{(i)}+\omega \mathbf{b},
\end{equation}
where $i$ is the iteration number and $\omega$ is the relaxation parameter, which is computed as 
\begin{equation}\label{relaxation_para}
\omega =\frac{2}{1+\sqrt{2\left( 1-\mu \right)}}
\end{equation}
with $\mu =( 1+\sqrt{{K}/{N}} ) ^2-1$ \cite{xie2016low}.

\item Finally, we apply the SOR method within the reverse order to obtain the second half iteration result as 
\begin{equation}\label{SSOR_2}
(\mathbf{D}+\omega \mathbf{L}^H)\mathbf{x}^{(i+1)}=(1-\omega )\mathbf{Dx}^{(i+1/2)}-\omega \mathbf{Lx}^{(i+1/2)}+\omega \mathbf{b}.
\end{equation}
\end{enumerate}

\begin{rem}
By applying this SSOR method, the highly computational demanding matrix inversion with $\mathcal{O} ( N^3 ) $ computational complexity can be avoided since only the total computational complexity of $\mathcal{O} ( N^2 ) $ \cite{xie2016low} is required for each iteration in \eqref{SSOR_1} and \eqref{SSOR_2}.
\end{rem}

\subsection{Ins-SSOR-LMMSE Combining Scheme}\label{sec_ins_ssor}
In this part, we introduce an Instantaneous SSOR algorithm aided LMMSE (Ins-SSOR-LMMSE) combining scheme $\mathbf{v}_{mk}^{\mathrm{Ins}-\mathrm{SSOR}}$. More specifically, the SSOR algorithm introduced in Sec.~\ref{SSOR_Fun} is applied to solve the instantaneous information-based matrix inversion in \eqref{eq_LMMSE}. We let the variables of the SSOR algorithm in Sec.~\ref{SSOR_Fun} be 
\begin{equation}\label{A_Ins}
\mathbf{A}= \sum_{l=1}^K{p_l( \hat{\mathbf{g}}_{ml}\hat{\mathbf{g}}_{ml}^{H}+\mathbf{B}_{ml}+\mathbf{B}_{ml}^{H}+\mathbf{C}_{ml} )}+\sigma ^2\mathbf{I}_N\triangleq \mathbf{A}_m^{\mathrm{Ins}},
\end{equation} 
$\mathbf{b}=p_k\hat{\mathbf{g}}_{mk}\triangleq\mathbf{b}_{mk}$, and $\mathbf{x}=\mathbf{v}_{mk}^{\mathrm{Ins}-\mathrm{SSOR}}$ respectively. Thus, following the steps of the SSOR algorithm, the main steps of the Ins-SSOR-LMMSE combining scheme $\mathbf{v}_{mk}^{\mathrm{Ins}-\mathrm{SSOR}}$ are summarized in Algorithm~\ref{Ins-SSOR}.

\begin{rem}
As observed from Algorithm~\ref{Ins-SSOR}, all variables, including $\mathbf{A}_m^{\mathrm{Ins}}$, are based on the instantaneous information. Thus, the matrix inversion of $(\mathbf{D}^{\mathrm{Ins}}+\omega \mathbf{L}^{\mathrm{Ins}})$ or $(\mathbf{D}^{\mathrm{Ins}}+\omega \mathbf{L}^{\mathrm{Ins},H})$ with $\mathbf{A}^{\mathrm{Ins}}=\mathbf{D}^{\mathrm{Ins}}+\mathbf{L}^{\mathrm{Ins}}+\mathbf{L}^{\mathrm{Ins},H}$ should be computed for every channel realization.
\end{rem}

\vspace{-0.3cm}

\begin{algorithm}[t]
\label{Sta-SSOR}
\caption{Sta-SSOR-LMMSE combining scheme}
\KwIn{$\mathbf{A}_m^{\mathrm{Sta}}$ given in \eqref{A_Sta} and $\mathbf{b}_{mk}=p_k\hat{\mathbf{g}}_{mk}$ for all AP-UE pairs; the number of algorithm iterations $N_{Iter}$; relaxation parameter $\omega$ computed in \eqref{relaxation_para};}
\KwOut{Ins-SSOR-LMMSE combining scheme $\mathbf{v}_{mk}^{\mathrm{Ins}-\mathrm{SSOR}}$ for all AP-UE pairs;}
{\bf Initiation:} $i=0$, $\mathbf{v}_{mk}^{\mathrm{Ins}-\mathrm{SSOR},\left( 0 \right)}=\mathbf{0}$;\\

\Repeat(){ $i\geqslant N_{Iter}$}
{
$i=i+1$\\
Decompose $\mathbf{A}_m^{\mathrm{Sta}}$ based on \eqref{A_SSOR};\\
Update the first half iteration result $\mathbf{v}_{mk}^{\mathrm{Sta}-\mathrm{SSOR},( i+\frac{1}{2} )}$ via \eqref{SSOR_1};\\
Update the second half iteration results $\mathbf{v}_{mk}^{\mathrm{Sta}-\mathrm{SSOR},\left( i \right)}$ via \eqref{SSOR_2};\\
}
\end{algorithm}

\begin{table*}[t!]
  \centering
  \fontsize{8.5}{11}\selectfont
  \caption{Comparison of all studied combining schemes in this paper.}
  \vspace*{-0.3cm}
  \label{Comparisons}
   \begin{tabular}{ !{\vrule width0.7 pt}  m{1.65 cm}<{\centering} !{\vrule width0.7pt} m{3.7 cm}<{\centering} !{\vrule width0.7pt} m{2.7 cm}<{\centering} !{\vrule width0.7pt}  m{3.5 cm}<{\centering} !{\vrule width0.7pt}  m{3.4cm}<{\centering} !{\vrule width0.7pt}}

    \Xhline{0.7pt}
         \bf Scheme &\bf Basic idea &\bf Applied information & \bf Computational complexity of combining vector computation  & \bf Computational complexity of precomputation based on statistics \cr
    \Xhline{0.7pt}
    CMMSE &Minimizing the \textbf{global conditional MSE} &Global instantaneous information & $\mathcal{O} ( M^2N^2KN_r+M^3N^3N_r ) $ & $\mathcal{O} ( MN^3K )$\cr\hline
    GSLI-MMSE &Applying the \textbf{asymptotic analysis} to approximate the \textbf{global instantaneous} information matrices &Global statistics \& local instantaneous information  & $\mathcal{O} ( MN^3+MN^2KN_r+K^3+MNK^2N_r ) $   & $\mathcal{O} ( M^3N^3K^2 ) $\cr\hline
    LMMSE &Minimizing the \textbf{local conditional MSE} &Local instantaneous information & $\mathcal{O} ( MN^2KN_r+MN^3N_r ) $   & $\mathcal{O} ( MN^3K )$\cr\hline
    SI-LMMSE &Applying the \textbf{matrix expectations} to approximate \textbf{local instantaneous} information matrices  &Local instantaneous information \& local statistics information matrix inversion & $\mathcal{O} ( MN^3+MN^2KN_r ) $ &$\mathcal{O} ( MN^3K )$\cr\hline
    Ins-SSOR &Applying the \textbf{SSOR algorithm} to solve the \textbf{local instantaneous} information matrix inversion  &Local instantaneous information &$\mathcal{O} ( MN^2KN_rN_{Iter} ) $ & $\mathcal{O} ( MN^3K )$  \cr\hline
    Sta-SSOR &Applying the \textbf{SSOR algorithm} to solve the \textbf{local statistics} information matrix inversion &Local instantaneous information \& local statistics information matrix inversion &$\mathcal{O} ( MN^2KN_rN_{Iter} ) $ & $\mathcal{O} ( MN^3K )$  \cr\hline
    Ins-SI-SSOR &Involving the \textbf{statistics information-based initial value} to the Ins-SSOR scheme &Local instantaneous information \& local statistics information-based initial value &$\mathcal{O} ( MN^2KN_rN_{Iter}+MN^3 ) $ & $\mathcal{O} ( MN^3K )$  \cr\hline

    \Xhline{0.7pt}
    \end{tabular}
  \vspace*{-0.6cm}
\end{table*}

\subsection{Sta-SSOR-LMMSE Combining Scheme}\label{sec_sta_ssor}
Then, we investigate a Statistics matrix inversion SSOR algorithm aided LMMSE (Sta-SSOR-LMMSE) combining scheme $\mathbf{v}_{mk}^{\mathrm{Sta}-\mathrm{SSOR}}$. More specifically, we apply the SSOR algorithm to solve the statistics matrix inversion in the SI-LMMSE combining scheme as in 
\eqref{eqSLMMSE} by letting variables of the SSOR algorithm in Sec.~\ref{SSOR_Fun} as  
\begin{equation}\label{A_Sta}
\mathbf{A}= \sum_{l=1}^K{p_l( \bar{\mathbf{R}}_{ml}+\mathbf{B}_{ml}+\mathbf{B}_{ml}^{H}+\mathbf{C}_{ml} )}+\sigma ^2\mathbf{I}_N\triangleq \mathbf{A}_m^{\mathrm{Sta}},
\end{equation} 
$\mathbf{b}=p_k\hat{\mathbf{g}}_{mk}\triangleq\mathbf{b}_{mk}$, and $\mathbf{x}=\mathbf{v}_{mk}^{\mathrm{Sta}-\mathrm{SSOR}}$ respectively. Based on the fundamentals of the SSOR algorithm, we summarize the main steps of the Sta-SSOR-LMMSE combining scheme $\mathbf{v}_{mk}^{\mathrm{Sta}-\mathrm{SSOR}}$ in Algorithm~\ref{Sta-SSOR}.

\begin{rem}
Compared with the Ins-SSOR-LMMSE combining as in Algorithm~\ref{Ins-SSOR}, where $\mathbf{A}_m^{\mathrm{Ins}}$ is based on the instantaneous information, $\mathbf{A}_m^{\mathrm{Sta}}$ in Algorithm~\ref{Sta-SSOR} is only based on the statistics information, which remains constant over in each realization of the BS/UE locations. Thus, the matrix inversion of the $(\mathbf{D}^{\mathrm{Sta}}+\omega \mathbf{L}^{\mathrm{Sta}})$ or $(\mathbf{D}^{\mathrm{Sta}}+\omega \mathbf{L}^{\mathrm{Sta},H})$ with $\mathbf{A}^{\mathrm{Sta}}=\mathbf{D}^{\mathrm{Sta}}+\mathbf{L}^{\mathrm{Sta}}+\mathbf{L}^{\mathrm{Sta},H}$ only needs to be computed once for each realization of the BS/UE locations, which involves much smaller computational complexity than that of the Ins-SSOR-LMMSE combining scheme.
\end{rem}

% \begin{table}[t!]
%   \centering
%   \fontsize{8.5}{11}\selectfont
%   \caption{Computational complexity for each realization of the BS/UE locations for all studied combining schemes. }
%   \vspace*{-0.3cm}
%   \label{Comparisons}
%    \begin{tabular}{ !{\vrule width0.7 pt}  m{1.7 cm}<{\centering} !{\vrule width0.7pt}  m{3.3 cm}<{\centering} !{\vrule width0.7pt}  m{2.6cm}<{\centering} !{\vrule width0.7pt}}

%     \Xhline{0.7pt}
%          \bf Scheme & \bf Combining vector computation  & \bf Precomputation based on statistics \cr
%     \Xhline{0.7pt}
%     CMMSE & $\mathcal{O} ( M^2N^2KN_r+M^3N^3N_r ) $ & $\mathcal{O} ( MN^3K )$\cr\hline
%     GSLI-MMSE & $\mathcal{O} ( MN^3+MN^2KN_r+K^3+MNK^2N_r ) $   & $\mathcal{O} ( M^3N^3K^2 ) $\cr\hline
%     LMMSE & $\mathcal{O} ( MN^2KN_r+MN^3N_r ) $   & $\mathcal{O} ( MN^3K )$\cr\hline
%     SI-LMMSE & $\mathcal{O} ( MN^3+MN^2KN_r ) $ &$\mathcal{O} ( MN^3K )$\cr\hline
%     Ins-SSOR &$\mathcal{O} ( MN^2KN_rN_{Iter} ) $ & $\mathcal{O} ( MN^3K )$  \cr\hline
%     Sta-SSOR &$\mathcal{O} ( MN^2KN_rN_{Iter} ) $ & $\mathcal{O} ( MN^3K )$  \cr\hline
%     Ins-SI-SSOR &$\mathcal{O} ( MN^2KN_rN_{Iter}+MN^3 ) $ & $\mathcal{O} ( MN^3K )$  \cr\hline

%     \Xhline{0.7pt}
%     \end{tabular}
%   \vspace*{-0.6cm}
% \end{table}

\begin{algorithm}[t]
\label{Ins-SI-SSOR}
\caption{Ins-SI-SSOR-LMMSE combining scheme}
\KwIn{$\mathbf{A}_m^{\mathrm{Ins}}$ given in \eqref{A_Ins} and $\mathbf{b}_{mk}=p_k\hat{\mathbf{g}}_{mk}$ for all AP-UE pairs; the number of algorithm iterations $N_{Iter}$; relaxation parameter $\omega$ computed in \eqref{relaxation_para};}
\KwOut{Ins-SI-SSOR-LMMSE combining scheme $\mathbf{v}_{mk}^{\mathrm{Ins}-\mathrm{SI}-\mathrm{SSOR}}$ for all AP-UE pairs;}
{\bf Initiation:} $i=0$, $\mathbf{v}_{mk}^{\mathrm{Ins}-\mathrm{SI}-\mathrm{SSOR},\left( 0 \right)}=p_k\mathbf{A}_{m}^{\mathrm{Sta},-1}\hat{\mathbf{g}}_{mk}$;\\

\Repeat(){ $i\geqslant N_{Iter}$}
{
$i=i+1$\\
Decompose $\mathbf{A}_m^{\mathrm{Ins}}$ based on \eqref{A_SSOR};\\
Update the first half iteration result $\mathbf{v}_{mk}^{\mathrm{Ins}-\mathrm{SI}-\mathrm{SSOR},( i+\frac{1}{2} )}$ via \eqref{SSOR_1};\\
Update the second half iteration results $\mathbf{v}_{mk}^{\mathrm{Ins}-\mathrm{SI}-\mathrm{SSOR},\left( i \right)}$ via \eqref{SSOR_2};\\
}
\end{algorithm}
\subsection{Ins-SI-SSOR-LMMSE Combining Scheme}
As observed in above two SSOR method aided combining schemes, the initial values $\mathbf{v}_{mk}^{\mathrm{Ins}-\mathrm{SSOR},\left( 0 \right)}$, $\mathbf{v}_{mk}^{\mathrm{Sta}-\mathrm{SSOR},\left( 0 \right)}$ are set as $\mathbf{0}$. To enhance the achievable performance and accelerate the convergence speed of the SSOR algorithm, motivated by \cite{liu2024fast}, we propose an Instantaneous SSOR algorithm with Statistics-based Initial value aided LMMSE (Ins-SI-SSOR-LMMSE) combining scheme $\mathbf{v}_{mk}^{\mathrm{Ins}-\mathrm{SI}-\mathrm{SSOR}}$. More specifically, we define the initial value of the SSOR algorithm as $\mathbf{x}^{\left( 0 \right)}=\mathbf{v}_{mk}^{\mathrm{Ins}-\mathrm{SI}-\mathrm{SSOR},\left( 0 \right)}=p_k\mathbf{A}_{m}^{\mathrm{Sta},-1}\hat{\mathbf{g}}_{mk}$ with $\mathbf{A}_{m}^{\mathrm{Sta}}$ given as in \eqref{A_Sta}. Then, we apply the instantaneous information-based matrix $\mathbf{A}_m^{\mathrm{Ins}}$ to design the Ins-SI-SSOR-LMMSE combining scheme, and the main steps are provided in Algorithm~\ref{Ins-SI-SSOR}.

\begin{rem}
As examined in the following simulation results, the introduction of statistics matrix-based initial value can not only improve the performance but also accelerate the convergence speed of the SSOR method. Besides, the involvement of the initial value requires $N \times N$  matrix inversion with the computational complexity of $\mathcal{O} ( N^3 ) $ computation complexity. However, the matrix inversion in the initial value computation is based on the statistics information and thus is only needed to be computed once for each realization of the BS/UE positions, which is of much smaller computational complexity than the LMMSE combining in \eqref{eq_LMMSE}.
\end{rem}

\begin{rem}
This part provides a potential initial value choice for the SSOR algorithm. Note that the zero initial value choices utilized in Sec.~\ref{sec_ins_ssor} and Sec.~\ref{sec_sta_ssor} are simple and robust but not optimal choices. Balancing the achievable performance and coverage speed, as discussed in the simulation results, we advocate for utilizing the statistical information-based initial values for the instantaneous SSOR algorithm-aided LMMSE schemes. As for the Sta-SSOR-LMMSE combining scheme, it would be insightful to explore some more potential initial value choices in the future. For instance, in scenarios with low user mobility, we can apply the previously computed combining scheme results as the initial values in a new round of SSOR iteration to further involve the iteration savings and performance enhancement than the zero initial value scenario.
\end{rem}

\section{Overview and Practical Implementation Guidance for Studied Combining Schemes}

To have a comprehensive overview of all studied combining schemes in this paper, we compare all studied combining schemes from the perspective of the basic idea, applied information type, and computational complexity in Table~\ref{Comparisons}.  Note that the terminologies Ins-SSOR, Sta-SSOR, and Ins-SI-SSOR denote respective SSOR-based LMMSE combining schemes introduced above. The computational complexity is computed for each realization of the BS/UE locations, where $N_{Iter}$ is the iteration number of the SSOR algorithm. We assume that the statistical information varies with each realization of the BS/UE locations, and $N_r$ is the number of instantaneous channel realizations per realization of the BS/UE locations. Moreover, unless otherwise stated, the matrix inversion in one particular scheme is implemented based on the information type in the ``Applied information", where $I_r$ denotes the number of channel realizations for each realization of the AP/UE locations.

To provide practical implementation guidance on when each proposed combining scheme is advocated to be used, we relate them to the available information, the per coherence block computational/latency budget, and the stability of second-order statistics.  The GSLI-MMSE is recommended when the approaching CMMSE performance is desired, but exchanging global instantaneous CSI is infeasible, while exchanging slowly varying global statistics is acceptable. 
The SI-LMMSE is recommended under stringent online complexity/latency constraints, since the statistics-based terms can be reused across many instantaneous channel realizations. However, it cannot effectively exploit instantaneous channel realization-dependent interference suppression and may suffer a larger gap to the LMMSE scheme with a large number of array antennas, as will be demonstrated in Fig.~\ref{FIG_5_SI_K}. 
The SSOR-based schemes provide a tunable tradeoff between performance and complexity through the number of iterations. The Ins-SSOR-LMMSE is suitable when per-block iterative processing is affordable, and one aims to approach the LMMSE without explicit matrix inversion. The Sta-SSOR-LMMSE is preferable when second-order statistics are stable and further online savings are required by reusing a statistics-based system matrix, at the cost of additional mismatch under rapidly varying instantaneous conditions. The Ins-SI-SSOR-LMMSE is recommended when the iteration budget is small, but one still targets excellent performance, since the statistical information-based warm start accelerates convergence and yields a more favorable accuracy-complexity tradeoff with few iterations.

\begin{rem}\label{general}
The proposed combining schemes are not specifically tailored to any particular near-field channel model. Instead, they are formulated in terms of the estimated channel response $\hat{\mathbf{g}}_{mk}$ and long-term second-order statistics, such as the channel covariance matrices-related terms. Therefore, when these inputs are obtained under the near-field spherical-wave model and the mutual coupling as described in Section~\ref{system}, the corresponding near-field and mutual coupling effects are embedded in $\hat{\mathbf{g}}_{mk}$ and the associated statistics, and thus, the resulting combining vectors naturally incorporate such characteristics. Note that all studied combining schemes with tractable expressions hold not only for the near-field channel model, but also for the generalized channel models distributed as $\mathbf{g}_{mk}\sim \mathcal{N} _{\mathbb{C}}( \bar{\mathbf{g}}_{mk},\check{\mathbf{R}}_{mk} ) $. For instance, $\check{\mathbf{R}}_{mk}$ can be modeled based on the Weichselberger model, the Kronecker model, and so on \cite{04962}. This fact showcases the generalizability and scalability of the studied combining schemes.
\end{rem}

\begin{rem}
The proposed schemes can be implemented in a distributed manner, where each BS can perform local combining with its own instantaneous CSI and different levels of statistical information. These schemes are robust to limited channel bandwidth and non-negligible latency, since only long-term statistical information needs to be exchanged between different BSs. However, in networks with limited resources (e.g., limited channel bandwidth or severe latency), spatial division multiplexing (SDM) and over-the-air (OTA) techniques can be flexibly utilized to enhance the practical feasibility of the studied combining schemes.
\end{rem}

\section{Numerical Results}\label{numerical}
We consider a CF XL-MIMO system, where BSs and UEs are randomly located at a $1\times 1 \, \mathrm{km}^2$ area. All BSs and UEs are assumed to have LoS paths, and thus the large-scale fading coefficient for the LoS and NLoS parts can be modeled as $\beta _{mk}^{\mathrm{LoS}}=\frac{\kappa _{mk}}{\kappa _{mk}+1}\beta _{mk}$ and $\beta _{mk}^{\mathrm{NLoS}}=\frac{1}{\kappa _{mk}+1}\beta _{mk}$, respectively, where $\beta _{mk}$ is modeled based on the 3GPP COST 231 Walfish-Ikegami model as in \cite[Eq. (5.2-3)]{3GPP2024} and $\kappa _{mk}=10^{1.3-0.003d_{mk}}$ is the Rician $\kappa$ factor between AP $m$ and UE $k$. Moreover, we consider the BS and UE heights as $L_{BS}=12.5 \, \mathrm{m}$ and $L_{UE}=1.5 \, \mathrm{m}$, respectively. Besides, we have the carrier frequency $f_c=3 \, \mathrm{GHz}$, $\tau _c=200
$, $\tau _p=1$, $p_k=200 \, \mathrm{mW}$, and $\sigma ^2=-94 \, \mathrm{dBm}$, respectively. For the parameters of the mutual coupling effect, we have $\eta =120\,\pi $, $\gamma _0=0.577$, $Z_L=50\,\Omega$, $\Delta_l=0.1\lambda $, and, $r=10^{-5}\lambda $, respectively. Unless mentioned, we have $N_{Iter}=5$, and we apply the MMSE channel estimator. For each realization of the BS/UE locations, there are $N_r=800$ channel realizations. Besides, we consider the square UPA at each BS, where $N_x=N_y$ and $\Delta_x=\Delta_y$. Note that the network scale can be characterized by the system load metric---the ratio between the total BS antennas and the total UE antennas, that is $MN_xN_y/K$ in this paper. The scenarios with relatively large or small load ratios can showcase the scenarios with extremely large antenna arrays and with ultra-high user density, respectively. Based on this criteria, one can access to various network scales in the following simulation results. Meanwhile, different with the single-site XL-MIMO network, where hundreds or even thousands of antennas are provisioned at single BS, the CF XL-MIMO network distributes the antenna array across the cooperated BSs within the coverage area. Thus, under this cooperative paradigm, the effective antenna number in CF XL-MIMO is the total number of BS antennas, that is $MN_xN_y$. In the following, our simulations will involve CF XL-MIMO configurations ranging from hundreds to more than two thousand antennas.

In Fig.~\ref{FIG_1_GSLI_N}, we investigate the average SE performance for the GSLI-MMSE combining scheme against $N_x$. As observed, the GSLI-MMSE combining scheme showcases excellent SE performance approaching that of the CMMSE combining scheme and significantly outperforms the LMMSE combining scheme. For instance, with $N_x=16$, there is only $0.38 \%$ performance gap between the GSLI-MMSE and CMMSE combining schemes and $42.38 \%$ performance improvement between the GSLI-MMSE and LMMSE combining schemes, respectively. Moreover, the SE performance gaps between the GSLI-MMSE and CMMSE combining schemes becomes smaller as $N_x$ increases, where there are $1.96 \%$ and $0.42 \%$ performance gaps for $N_x=4$ and $N_x=16$, respectively. This observation showcases that more excellent performance can be achieved by the GSLI-MMSE combining when the number of antennas is large, which demonstrates the asymptotic behavior in Corollary~\ref{GSLI}. Notably, the GSLI-MMSE also performs well in the scenario with a small number of antennas, which further demonstrates its promising applicability. All these findings demonstrate that the proposed GSLI-MMSE combining scheme can showcase excellent performance with only additional global statistics information compared to the LMMSE combining scheme.

Fig.~\ref{FIG_2_GSLI_M} shows the sum SE performance for the GSLI-MMSE combining scheme against $M$. We observe that the GSLI-MMSE showcases the excellent SE performance in both scenarios with small and large $M$. Moreover, the performance gap between the GSLI-MMSE and LMMSE combining schemes becomes larger as $M$ increases. For instance, $25.95 \%$ and $44.63 \%$ SE improvement can be achieved by the GSLI-MMSE combining compared to the LMMSE combining for $M=2$ and $M=6$, respectively. This is because the GSLI-MMSE combining scheme can utilize the global statistics information from all BSs to facilitate the SE performance. Meanwhile, the performance gaps between the GSLI-MMSE and the C-MMSE combining schemes are tiny for different numbers of BSs.

In Fig.~\ref{FIG_3_Estimator}, we study the average SE performance for the GSLI-MMSE combining scheme under different channel estimators. The UatF capacity bound is applied to evaluate the achievable SE performance for the centralized processing since the standard capacity bound as in \eqref{SINR_centrlized_Standard} only holds for the MMSE channel estimators. EW-MMSE and generalized least-square (GLS) estimators are involved with $\mathbf{A}_{mk}=\sqrt{p_k}\mathbf{D}_{mk}\mathbf{\Gamma }_{mk}^{-1}$ and $\mathbf{A}_{mk}=1/( \sqrt{p_k}\tau _p ) \mathbf{I}_N$, respectively, with $\mathbf{D}_{mk}=\mathrm{diag}( [ \check{\mathbf{R}}_{mk} ] _{nn}: n=1,\dots ,N )$ and $\mathbf{\Gamma }_{mk}=\mathrm{diag}( [ \mathbf{\Psi }_{mk} ] _{nn}: n=1,\dots ,N )$. As observed, the MMSE estimator can achieve the highest SE among these estimators for any choice of the combining scheme. Besides, the GSLI-MMSE combining performs well under all considered channel estimators and the performance gap between the GSLI-MMSE and CMMSE combining is smallest under the MMSE estimator, that is only 
$0.59 \%$.

\begin{figure}[t]
\centering
\includegraphics[scale=0.5]{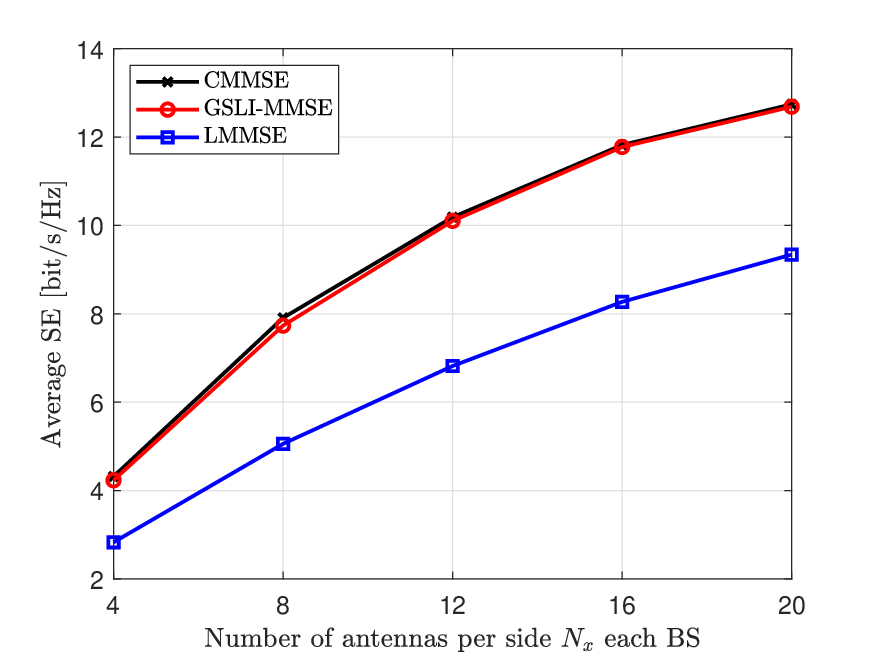}
\caption{Average SE performance for the GSLI-MMSE combining scheme against the number of antennas per side each BS $N_x$ with $M=4$, $K=20$, and $\Delta_x=\lambda/4$. \label{FIG_1_GSLI_N}}
\vspace{-0.3cm}
\end{figure}

\begin{figure}[t]
\centering
\includegraphics[scale=0.5]{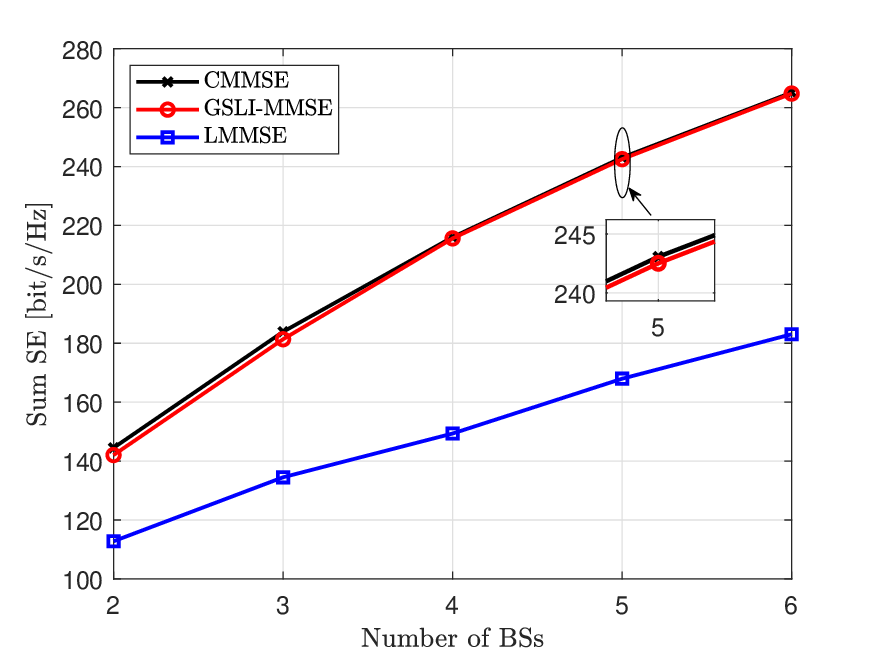}
\caption{Sum SE performance for the GSLI-MMSE combining scheme against the number of BSs $M$ with $N_x=16$, $K=20$, and $\Delta_x=\lambda/4$. \label{FIG_2_GSLI_M}}
\vspace{-0.3cm}
\end{figure}

\begin{figure}[t]
\centering
\includegraphics[scale=0.5]{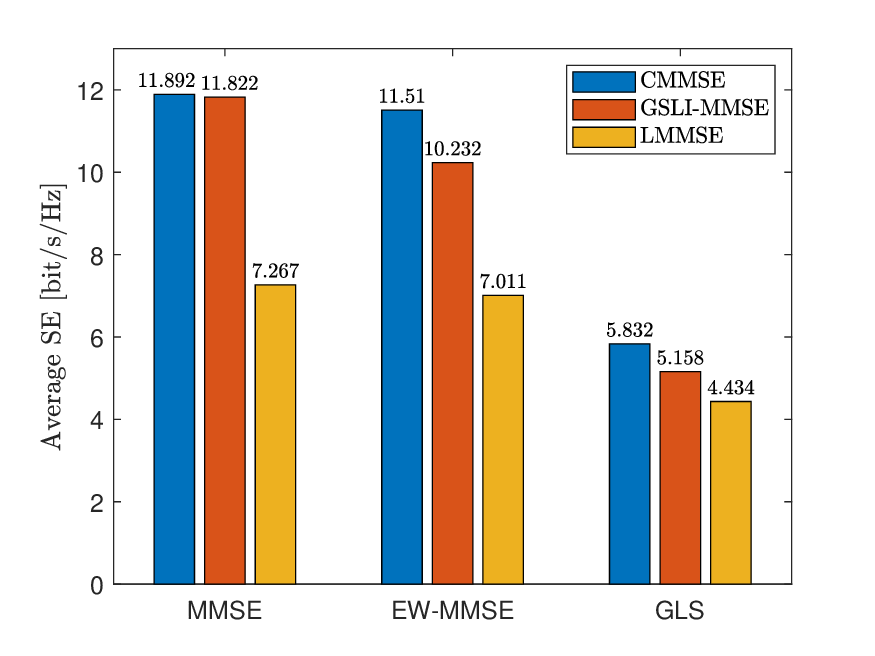}
\caption{Average SE performance evaluated by the UatF bound for the GSLI-MMSE combining scheme under different channel estimators with $M=8$, $N_x=8$, $K=20$, and $\Delta_x=\lambda/4$. \label{FIG_3_Estimator}}
\vspace{-0.3cm}
\end{figure}

\begin{figure}[t]
\centering
\includegraphics[scale=0.5]{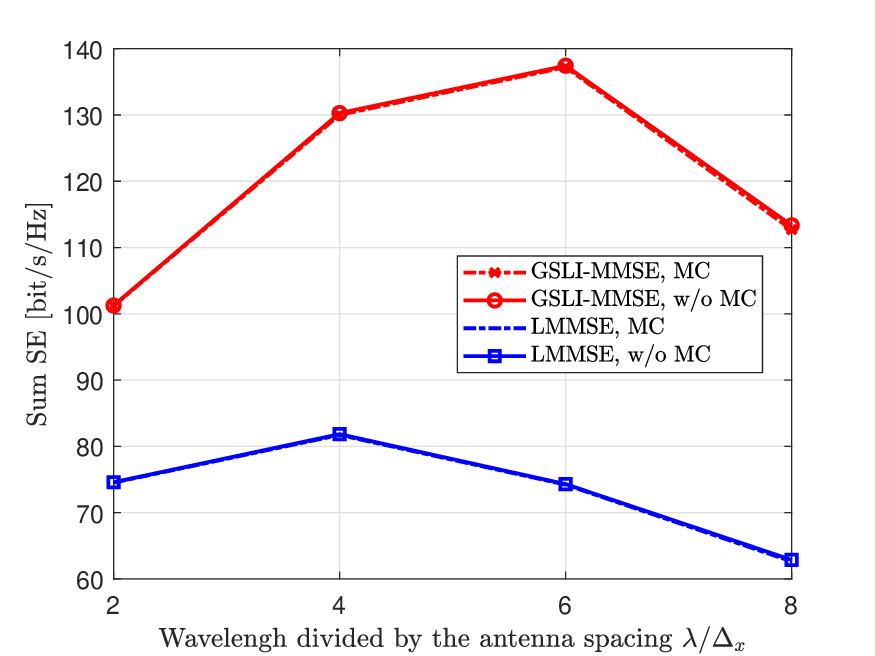}
\caption{Sum SE performance for the GSLI-MMSE combining scheme against $\lambda /\Delta _x$ with $M=8$, $N_x=8$, and $K=20$. \label{FIG_4_MC}}
\vspace{-0.3cm}
\end{figure}

\begin{figure}[t]
\centering
\includegraphics[scale=0.5]{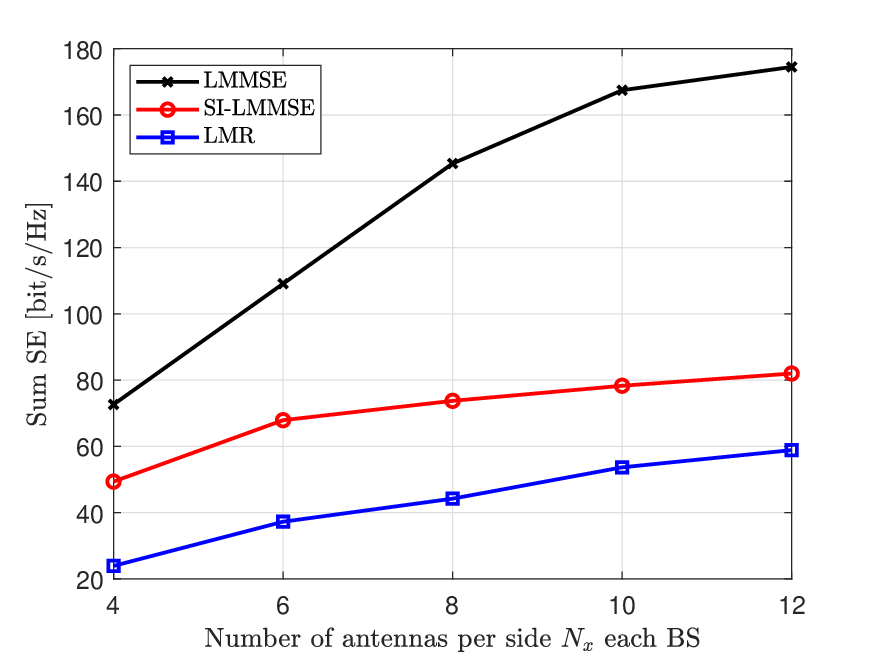}
\caption{Sum SE performance for the SI-LMMSE combining scheme against the number of antennas per side each BS $N_x$ with $M=6$ $K=20$, and $\Delta_x=\lambda/4$. \label{FIG_5_SI_K}}
\vspace{-0.3cm}
\end{figure}

Fig.~\ref{FIG_4_MC} shows the sum SE performance for the GSLI-MMSE combining scheme against $\lambda /\Delta _x$, which demonstrates that the larger value of $\lambda /\Delta _x$ denotes the smaller antenna spacing. As observed in Fig.~\ref{FIG_4_MC}, there is an optimal antenna spacing, which can achieve the best SE performance. For instance, for the GSLI-MMSE combining scheme, the performance gap between $\lambda /\Delta _x=2$ and  $\lambda /\Delta _x=6$ is $35.11 \%$ in the scenario without the mutual coupling effect. This observation demonstrates that, to facilitate the performance, the antennas at each BS should be deployed in a manner with proper antenna spacing. However, we confirm that this observation in this paper is directly derived from the simulation results, which is not further studied since it is not the key technical issue of this paper. This phenomenon is also observed in \cite[Fig. 12]{liu2023double}, which is advocated to be further studied in the future. Moreover, we find that the mutual coupling effect may not significantly degrade the achievable performance. Even for $\Delta _x=\lambda/8$, there is only $0.24 \%$ performance gap between the scenarios without and with the mutual coupling effect for the GSLI-MMSE combining.

Fig.~\ref{FIG_5_SI_K} investigates the sum SE performance for the SI-LMMSE combining scheme against $N_x$. As observed, the SI-LMMSE performs well, especially when the number of antennas each BS is small. For $N_x=4$ and $N_x=12$, the performance gaps between the SI-LMMSE and LMMSE combining schemes are $32.04 \%$ and $53.02 \%$, respectively. Meanwhile, the performance improvements between the SI-LMMSE and LMR combining schemes are $106.35 \%$ and $39.26 \%$ for $N_x=4$ and $N_x=12$, respectively. These observations stem from the fact that the approximation in SI-MMSE as \eqref{eqSLMMSE} removes the instantaneous interference-subspace information carried by $\hat{\mathbf{g}}_{ml}\hat{\mathbf{g}}_{ml}^{H}$. When the array dimension grows, the LMMSE combining in \eqref{eq_LMMSE} can increasingly exploit the instantaneous channel directions to suppress interference more effectively, whereas the SI-LMMSE combining relies on the average subspace described by $\bar{\mathbf{R}}_{ml}$ and thus cannot effectively exploit the instantaneous variability of the interference subspace that is encoded in $\hat{\mathbf{g}}_{ml}\hat{\mathbf{g}}_{ml}^{H}$. However, the SI-LMMSE combining scheme showcases an excellent trade-off between the achievable performance and the computational complexity. The computational complexity of this scheme is significantly smaller than that of the LMMSE combining scheme (this argument can be observed from Table~\ref{Comparisons} and will also be verified in the following) and the SI-LMMSE scheme significantly outperforms the LMR scheme.

Next, we focus on three SSOR algorithm-based combining schemes. In Fig.~\ref{FIG_6_SSOR_Pillar}, we study the average SE performance for the SSOR algorithm-based combining schemes with different $N_x$. As observed, the Ins-SI-SSOR combining scheme (we omit the terminology ``LMMSE" for simplicity) can achieve the best SE performance among three SSOR-based combining schemes and there is only $12.68 \%$ performance gap between the Ins-SI-SSOR and LMMSE combining schemes for $N_x=8$. Meanwhile, the Ins-SSOR and Sta-SSOR combining schemes showcase $81.16 \%$ and $23.67 \%$ SE improvement compared with the local maximum ratio (LMR) combining scheme $\mathbf{v}_{mk}=\hat{\mathbf{g}}_{mk}$, respectively, for $N_x=16$. Besides, the Ins-SSOR combining scheme outperforms the Sta-SSOR combining scheme since the Sta-SSOR combining scheme is a matrix expectation-aided approximation version of the Ins-SSOR combining scheme. And we observe that the performance gap between the Sta-SSOR and Ins-SSOR combining schemes becomes larger as $N_x$ increases.

\begin{figure}[t]
\centering
\includegraphics[scale=0.5]{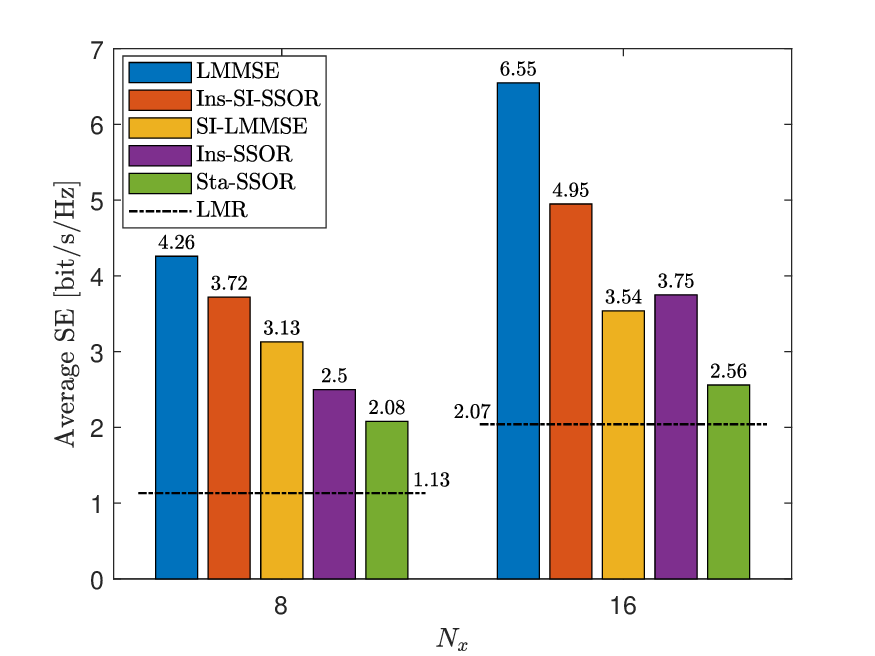}
\caption{Average SE performance for the SSOR algorithm-based combining schemes with $M=8$, $K=10$, and $\Delta_x=\lambda/8$. \label{FIG_6_SSOR_Pillar}}
\vspace{-0.3cm}
\end{figure}

\begin{figure}[t]
\centering
\includegraphics[scale=0.5]{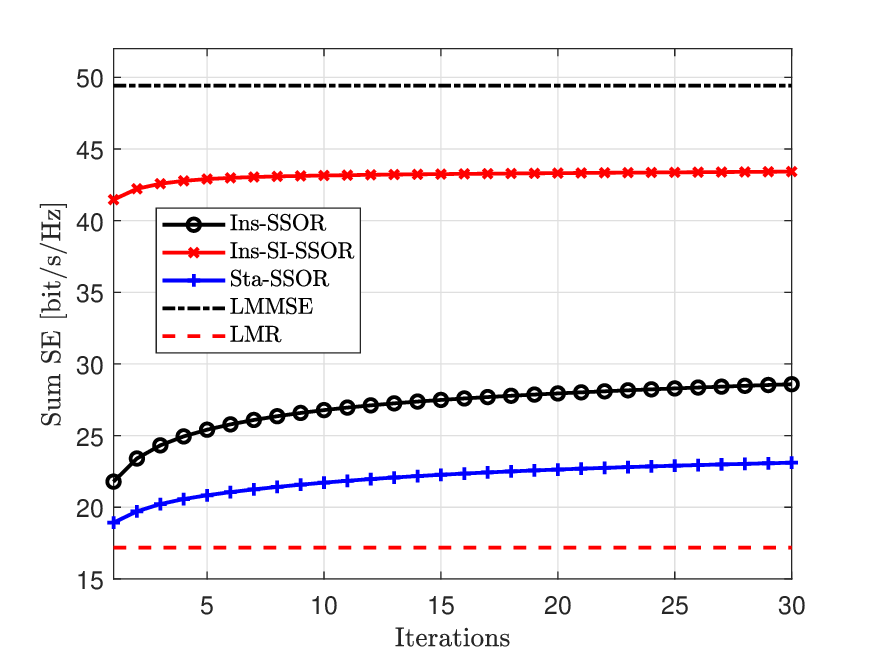}
\caption{Convergence examples of the SSOR algorithm-based combining schemes with $M=8$, $N_x=8$, $K=10$, and $\Delta_x=\lambda/8$. \label{FIG_7_SSOR_Cover}}
\vspace{-0.3cm}
\end{figure}

\begin{figure}[t]
\centering
\includegraphics[scale=0.5]{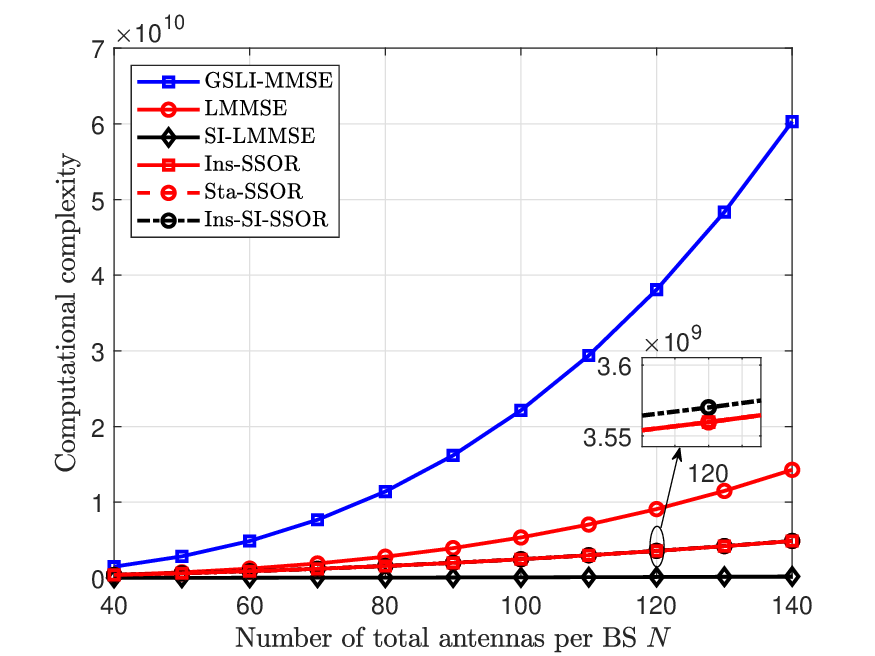}
\caption{Computational complexity of all studied combining schemes with $M=6$, $K=10$, $N_{Iter}=5$, and $N_r=800$. \label{FIG_8_Complexity}}
\vspace{-0.3cm}
\end{figure}

\begin{figure}[t]\centering
\vspace{0.3cm}
\subfigure[MMSE channel estimator]{
\begin{minipage}{8cm}\centering
\includegraphics[scale=0.5]{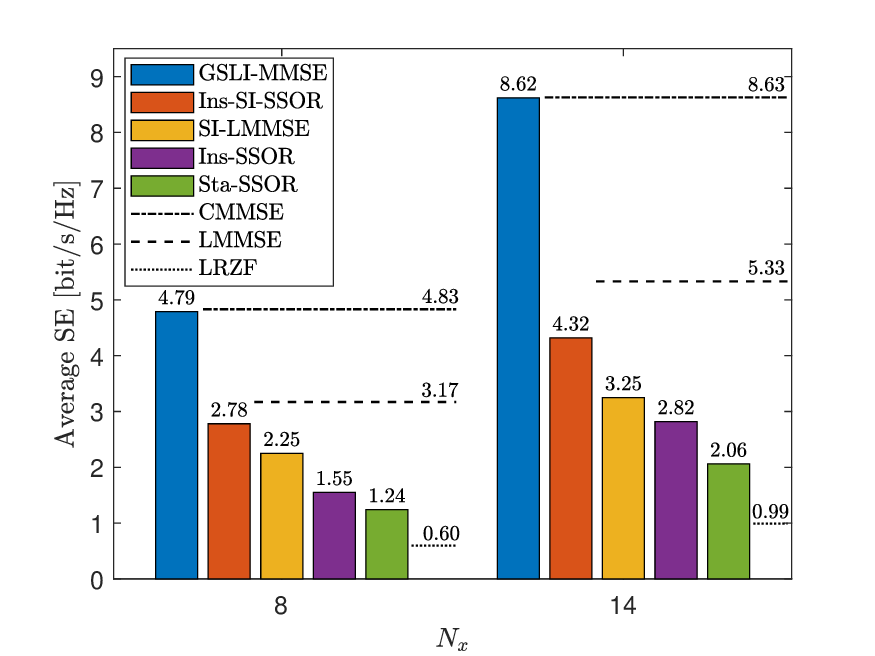}
\end{minipage}}
\subfigure[EW-MMSE channel estimator]{
\begin{minipage}{8cm}\centering
\includegraphics[scale=0.5]{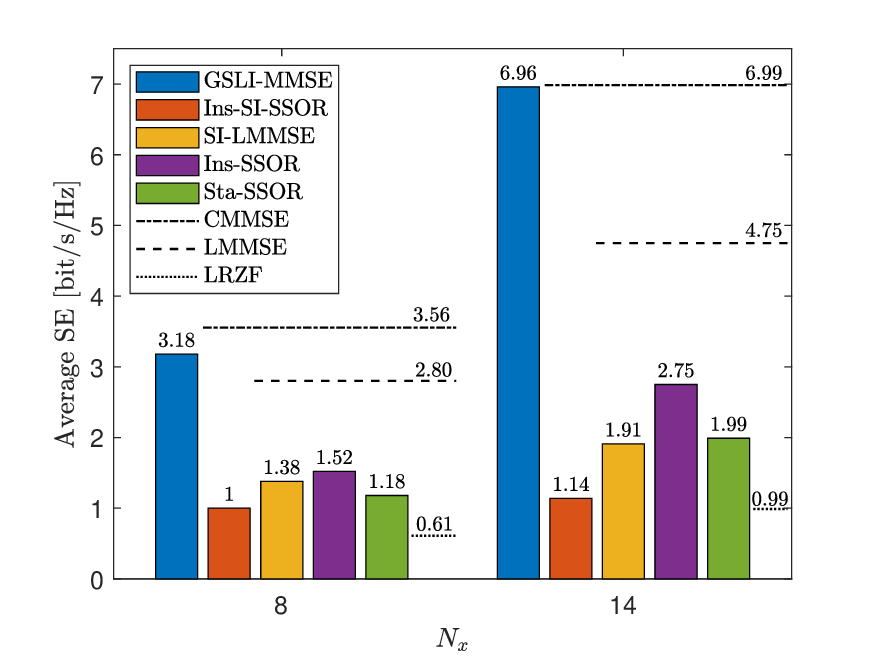}
\end{minipage}}
\caption{Average SE performance for all studied combining schemes over the MMSE and EW-MMSE channel estimators with $M=4$, $K=10$, and $\Delta_x=\lambda/8$. \label{FIG_9_All_Schemes_Pillar}}
\end{figure}

Fig.~\ref{FIG_7_SSOR_Cover} showcases the convergence examples for three SSOR algorithm-based combining schemes. As observed, with the aid of the initial values, the Ins-SI-SSOR can significantly outperform the other two SSOR algorithm-based combining schemes, whether from the perspective of the achievable performance or the convergence speed. More specifically, for instance, $20.24 \%$ and $49.04 \%$ SE improvement can be achieved by the Ins-SI-SSOR scheme compared with the Ins-SSOR and Sta-SSOR schemes, respectively, for $N_{Iter}=10$. Besides, the Ins-SI-SSOR scheme converges very fast, where only very few iterations are needed to achieve the excellent performance.

To have a comprehensive vision of all considered combining schemes in this paper, we now compare all studied schemes in this paper from the perspectives of the computational complexity and the achievable performance. Fig.~\ref{FIG_8_Complexity} compares the computational complexity of all proposed combining schemes against the total number of antennas each BS $N$. As observed, the GSLI-MMSE combining scheme involves the largest computational complexity, where the main computational complexity comes from the precomputation based on the statistics information as illustrated in Table~\ref{Comparisons}. Meanwhile, the SI-LMMSE combining scheme showcases the significantly smaller computational complexity than other schemes, since the statistics matrix inversion is only needed to be computed once in each realization of the BS/UE locations among all $N_r$ channel realizations. Besides, the SSOR algorithm-based schemes embrace much smaller computational complexity than the LMMSE combining scheme. Since the forward order and reverse order computations as in \eqref{SSOR_1} and \eqref{SSOR_2}, respectively, are required in every iteration and every channel realization, the computational complexity of the SSOR algorithm-based schemes is larger than that of the SI-LMMSE scheme.

Finally, in Fig.~\ref{FIG_9_All_Schemes_Pillar}, we compare the average SE performance for all studied combining schemes in this paper over MMSE and EW-MMSE channel estimators. We consider CMMSE, LMMSE, and local regularized zero-forcing (LRZF) combining as \cite[Eq. (4.9)]{8187178} as the benchmarks. Note that the LRZF combining is regarded as a low-complexity approach derived from \eqref{eq_LMMSE} by neglecting the covariance matrix-related terms in the matrix inversion and applying the standard matrix inversion result. As clearly observed in Fig.~\ref{FIG_9_All_Schemes_Pillar}, our studied combining schemes can achieve excellent SE performance compared to the benchmarks. More specifically, the GSLI-MMSE combining scheme can achieve the approaching performance to that of the CMMSE combining scheme over both the MMSE and EW-MMSE channel estimators. Note that the SSOR-based combining schemes and the SI-LMMSE combining scheme perform better over the MMSE channel estimator than the EW-MMSE channel estimator. For instance, with $N_x=8$, the performance gaps between the Ins-SI-SSOR combining scheme and the LMMSE combining scheme over the MMSE and EW-MMSE channel estimators are $12.3 \%$ and $64.29\%$, respectively. This observation demonstrates the importance of deriving the accurate estimation of the channel covariance matrices to implement low-complexity combining schemes to achieve excellent SE performance. Moreover, all these schemes can significantly outperform the lower benchmark generated by the LRZF combining scheme.

\section{Conclusions}\label{conclusion}
Low-complexity distributed combining scheme design for near-field CF XL-MIMO has been studied in this paper. Firstly, we have obtained the uplink SE performance analysis frameworks for CF XL-MIMO networks with the centralized and distributed processing schemes. We have derived the CMMSE and LMMSE combining schemes, which held for arbitrary channel estimators. Then, we have applied the matrix approximation methodologies to derive the GSLI-MMSE and SI-LMMSE combining schemes, where the global instantaneous information in the CMMSE combining and the local instantaneous information in the LMMSE combining is approximated by the global and local statistics information, respectively. Furthermore, we have derived three distributed SSOR-based LMMSE combining schemes by applying the low-complexity SSOR algorithm: Ins-SSOR-LMMSE, Sta-SSOR-LMMSE, and Ins-SI-SSOR-LMMSE combining schemes, which have been distinguished from the applied information and initial values. Numerical results have demonstrated that proposed low-complexity combining schemes can achieve the excellent SE performance with much smaller computational complexity than the conventional MMSE combining schemes.

\begin{appendices}
\section{Some Useful Lemmas}\label{useful}
\begin{lemm}\label{matrixinversion}
For the matrices $\mathbf{A}\in \mathbb{C} ^{n_1\times n_1}$, $\mathbf{B}\in \mathbb{C} ^{n_1\times n_2}$, and $\mathbf{C}\in \mathbb{C} ^{n_1\times n_2}$, we have 
\begin{equation}
( \mathbf{A}+\mathbf{BCB}^H ) ^{-1}\mathbf{B}=\mathbf{A}^{-1}\mathbf{B}( \mathbf{B}^H\mathbf{A}^{-1}\mathbf{B}+\mathbf{C}^{-1} ) ^{-1}\mathbf{C}^{-1},
\end{equation}
which can be easily proved based on the standard matrix inversion lemma as in \cite[Lemma B.3]{8187178}.
\end{lemm}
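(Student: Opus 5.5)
The plan is to verify the identity directly via the \emph{push-through} argument rather than expanding the full Woodbury formula. First I would fix the dimensions and hypotheses. For $\mathbf{BCB}^H$ to be defined, $\mathbf{C}$ must be square of size $n_2\times n_2$, so I read the statement's $\mathbf{C}\in\mathbb{C}^{n_1\times n_2}$ as $n_2\times n_2$. I would also assume that $\mathbf{A}$, $\mathbf{C}$, $\mathbf{A}+\mathbf{BCB}^H$ and $\mathbf{B}^H\mathbf{A}^{-1}\mathbf{B}+\mathbf{C}^{-1}$ are invertible. In the application of \eqref{CMMSE_Construct_inverse} this holds automatically: $\mathbf{Q}$ is Hermitian positive definite (it contains $\sigma^2\mathbf{I}_{MN}$ plus a covariance-type term) and $\mathbf{P}$ is a positive diagonal matrix. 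Hence $\hat{\mathbf{G}}^H\mathbf{Q}^{-1}\hat{\mathbf{G}}+\mathbf{P}^{-1}$ and $\hat{\mathbf{G}}\mathbf{P}\hat{\mathbf{G}}^H+\mathbf{Q}$ are positive definite.

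The key step is a factorization. Since $\mathbf{A}+\mathbf{BCB}^H$ is invertible, it suffices to show that multiplying the right-hand side on the left by $\mathbf{A}+\mathbf{BCB}^H$ returns $\mathbf{B}$. I would first factor
\begin{equation}\notag
(\mathbf{A}+\mathbf{BCB}^H)\mathbf{A}^{-1}\mathbf{B}=\mathbf{B}+\mathbf{BCB}^H\mathbf{A}^{-1}\mathbf{B}=\mathbf{BC}\left(\mathbf{C}^{-1}+\mathbf{B}^H\mathbf{A}^{-1}\mathbf{B}\right).
\end{equation}
Right-multiplying by $(\mathbf{B}^H\mathbf{A}^{-1}\mathbf{B}+\mathbf{C}^{-1})^{-1}\mathbf{C}^{-1}$ cancels the bracket and leaves $\mathbf{BCC}^{-1}=\mathbf{B}$. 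Left-multiplying both sides by $(\mathbf{A}+\mathbf{BCB}^H)^{-1}$ then gives the claim.

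As a cross-check, one can instead start from the standard matrix inversion lemma \cite[Lemma B.3]{8187178},
\begin{equation}\notag
(\mathbf{A}+\mathbf{BCB}^H)^{-1}=\mathbf{A}^{-1}-\mathbf{A}^{-1}\mathbf{B}(\mathbf{B}^H\mathbf{A}^{-1}\mathbf{B}+\mathbf{C}^{-1})^{-1}\mathbf{B}^H\mathbf{A}^{-1}.
\end{equation}
Right-multiplying by $\mathbf{B}$ and writing $\mathbf{B}^H\mathbf{A}^{-1}\mathbf{B}=(\mathbf{B}^H\mathbf{A}^{-1}\mathbf{B}+\mathbf{C}^{-1})-\mathbf{C}^{-1}$ yields the same expression.

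There is no real obstacle here. The only points needing care are the corrected dimension of $\mathbf{C}$ and the invertibility of the $n_2\times n_2$ matrix $\mathbf{B}^H\mathbf{A}^{-1}\mathbf{B}+\mathbf{C}^{-1}$, which I would justify by positive definiteness in the setting where the lemma is used.
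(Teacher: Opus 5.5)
Your argument is correct. The paper gives no computation beyond pointing to the standard matrix inversion (Woodbury) lemma. That is exactly your ``cross-check'' route: expand $(\mathbf{A}+\mathbf{BCB}^H)^{-1}$, right-multiply by $\mathbf{B}$, and absorb $\mathbf{B}^H\mathbf{A}^{-1}\mathbf{B}$ into the bracket.

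Your main route instead verifies the identity directly through the push-through factorization $(\mathbf{A}+\mathbf{BCB}^H)\mathbf{A}^{-1}\mathbf{B}=\mathbf{BC}(\mathbf{C}^{-1}+\mathbf{B}^H\mathbf{A}^{-1}\mathbf{B})$. It is self-contained, since it is essentially how Woodbury itself is proved. It also never needs an explicit formula for $(\mathbf{A}+\mathbf{BCB}^H)^{-1}$. If you factor instead as $\mathbf{B}(\mathbf{I}+\mathbf{C}\mathbf{B}^H\mathbf{A}^{-1}\mathbf{B})$, you get the variant $(\mathbf{A}+\mathbf{BCB}^H)^{-1}\mathbf{B}=\mathbf{A}^{-1}\mathbf{B}(\mathbf{I}+\mathbf{C}\mathbf{B}^H\mathbf{A}^{-1}\mathbf{B})^{-1}$, which does not require $\mathbf{C}^{-1}$ at all. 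The paper's route buys brevity by leaning on the cited identity.

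Your correction of $\mathbf{C}$ to $n_2\times n_2$ is right. Your invertibility hypotheses are slightly redundant: for invertible $\mathbf{A}$ and $\mathbf{C}$, $\det(\mathbf{A}+\mathbf{BCB}^H)=\det(\mathbf{A})\det(\mathbf{C})\det(\mathbf{C}^{-1}+\mathbf{B}^H\mathbf{A}^{-1}\mathbf{B})$. So invertibility of $\mathbf{A}+\mathbf{BCB}^H$ and of $\mathbf{C}^{-1}+\mathbf{B}^H\mathbf{A}^{-1}\mathbf{B}$ are equivalent.

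One side claim should be toned down. $\mathbf{Q}$ is guaranteed positive definite only for the MMSE estimator, where $\mathbf{B}_l=\mathbf{0}$ and the error covariance $\mathbf{C}_l$ is positive semidefinite. For a general $\mathbf{A}_{mk}$ one has $\mathbf{B}_l+\mathbf{B}_l^H+\mathbf{C}_l=\check{\mathbf{R}}_l-\hat{\mathbf{R}}_l$, which need not be a covariance matrix. Take the LS estimator $\mathbf{A}_{mk}=\mathbf{I}_N/(\sqrt{p_k}\tau_p)$ used in the paper's simulations. Then $p_k(\mathbf{B}_{mk}+\mathbf{B}_{mk}^H+\mathbf{C}_{mk})=-\sum_{l\in\mathcal{P}_k\setminus\{k\}}p_l\check{\mathbf{R}}_{ml}-\frac{\sigma^2}{\tau_p}\mathbf{I}_N$. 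Consequently $\mathbf{Q}_m=\mathbf{0}$ when $K=\tau_p$ without pilot reuse, and $\mathbf{Q}_m$ is negative definite when $K>\tau_p$.

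In that generality, the invertibility of $\mathbf{Q}$ and of $\hat{\mathbf{G}}^H\mathbf{Q}^{-1}\hat{\mathbf{G}}+\mathbf{P}^{-1}$ must be assumed rather than derived from positive definiteness. This does not affect the lemma itself, which is a pure algebraic identity under the hypotheses you state.
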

\begin{lemm}\label{asymptotic}
For $\mathbf{A}\in \mathbb{C} ^{n_1\times n_1}$, $\mathbf{x}\sim \mathcal{N} _{\mathbb{C}}( \mathbf{0},\frac{1}{n_1}\mathbf{I}_{n_1} ) $, $\mathbf{y}\sim \mathcal{N} _{\mathbb{C}}( \mathbf{0},\frac{1}{n_1}\mathbf{I}_{n_1} ) $,  $\mathbf{x}$ and $\mathbf{y}$ are assumed to be mutually independent and independent of $\mathbf{A}$. Moreover, we assume that $\mathbf{A}$ showcases uniformly bounded spectral norm. Following the standard results in \cite{hoydis2013massive}, we have
\begin{align}
&\mathbf{x}^H\mathbf{Ax}\asymp \frac{1}{n_1}\mathrm{tr}\left( \mathbf{A} \right), \label{asy1} \\
&\mathbf{x}^H\mathbf{Ay}\asymp 0.\label{asy2}
\end{align}
\end{lemm}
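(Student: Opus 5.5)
The statement to prove is Lemma~\ref{asymptotic}. I would treat \eqref{asy1} and \eqref{asy2} as almost-sure convergence statements as $n_1\to\infty$. The plan is to condition on $\mathbf{A}$, which is legitimate because $\mathbf{x},\mathbf{y}$ are independent of $\mathbf{A}$. Under that conditioning, $\mathbf{A}$ is a deterministic sequence with $\sup_{n_1}\|\mathbf{A}\|<\infty$. The argument then proceeds by a mean computation, a fourth-moment bound, and the Borel--Cantelli lemma, following the trace lemma of \cite{hoydis2013massive}.

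\textbf{Step 1: the quadratic form \eqref{asy1}.} Write $\mathbf{x}=\mathbf{w}/\sqrt{n_1}$ with $\mathbf{w}\sim\mathcal{N}_{\mathbb{C}}(\mathbf{0},\mathbf{I}_{n_1})$.
\begin{itemize}
\item The mean is immediate: $\mathbb{E}\{\mathbf{x}^H\mathbf{A}\mathbf{x}\}=\frac{1}{n_1}\mathrm{tr}(\mathbf{A})$.
\item For the deviation, apply the standard moment bound for Gaussian quadratic forms:
\begin{equation}\notag
\mathbb{E}\big\{|\mathbf{w}^H\mathbf{A}\mathbf{w}-\mathrm{tr}(\mathbf{A})|^4\big\}\le C\big[(\mathrm{tr}(\mathbf{A}\mathbf{A}^H))^2+\mathrm{tr}((\mathbf{A}\mathbf{A}^H)^2)\big]\le C' n_1^2\|\mathbf{A}\|^4 .
\end{equation}
In the Gaussian case this can also be checked directly. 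Diagonalize via the singular value decomposition, or expand the cumulants of $\mathbf{w}^H\mathbf{A}\mathbf{w}$.
\item Dividing by $n_1^4$ gives $\mathbb{E}\{|\mathbf{x}^H\mathbf{A}\mathbf{x}-\frac{1}{n_1}\mathrm{tr}(\mathbf{A})|^4\}=O(n_1^{-2})$, which is summable in $n_1$.
\item Markov's inequality plus Borel--Cantelli then yield almost-sure convergence of the difference to zero.
\end{itemize}

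\textbf{Step 2: the bilinear form \eqref{asy2}.} Condition also on $\mathbf{y}$. Then $\mathbf{x}^H\mathbf{A}\mathbf{y}$ is complex Gaussian with zero mean and variance $\frac{1}{n_1}\|\mathbf{A}\mathbf{y}\|^2\le\frac{1}{n_1}\|\mathbf{A}\|^2\|\mathbf{y}\|^2$.
\begin{itemize}
\item Apply Step 1 to $\mathbf{y}$ with the identity matrix: $\|\mathbf{y}\|^2\to1$ almost surely, and its moments are bounded.
\item Hence $\mathbb{E}\{|\mathbf{x}^H\mathbf{A}\mathbf{y}|^4\}=2\,\mathbb{E}\{\|\mathbf{A}\mathbf{y}\|^4\}/n_1^2=O(n_1^{-2})$.
\item Borel--Cantelli again gives $\mathbf{x}^H\mathbf{A}\mathbf{y}\to0$ almost surely.
\end{itemize}
An alternative is to use the polarization identity, which reduces \eqref{asy2} to \eqref{asy1} applied to the stacked vector $[\mathbf{x}^T,\mathbf{y}^T]^T$ with the block matrix $\begin{bmatrix}\mathbf{0}&\mathbf{A}\\ \mathbf{0}&\mathbf{0}\end{bmatrix}$. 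Its trace is zero, so the limit is zero.

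\textbf{Main obstacle.} I expect the main difficulty to be the fourth-moment bound in Step 1. The second moment alone gives only $O(n_1^{-1})$, which is not summable, so a plain variance argument would give convergence in probability rather than almost-sure convergence. I would first establish the Gaussian fourth-moment identity explicitly through the cumulant expansion. Failing that, I would cite the trace lemma of \cite{hoydis2013massive} (Bai--Silverstein type) directly.

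Two further points need care. First, the bounded spectral norm must enter uniformly, which is what makes $\mathrm{tr}(\mathbf{A}\mathbf{A}^H)\le n_1\|\mathbf{A}\|^2$ hold uniformly in $n_1$. Second, the conditioning on $\mathbf{A}$ must be removed at the end. This works because the almost-sure event does not depend on which realization of $\mathbf{A}$ is fixed, so Fubini's theorem applies.
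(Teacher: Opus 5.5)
Your argument is correct and takes the same route as the paper. The paper gives no proof beyond invoking the trace lemma of \cite{hoydis2013massive}, and your conditioning on $\mathbf{A}$, the $O(n_1^{-2})$ fourth-moment bound, Markov's inequality and Borel--Cantelli are the standard Bai--Silverstein proof behind that citation, including the conditional-Gaussian fourth moment $2\|\mathbf{A}\mathbf{y}\|^4/n_1^2$ for the bilinear form. One small correction: for non-Hermitian $\mathbf{A}$ the SVD does not decouple $\mathbf{w}^H\mathbf{A}\mathbf{w}$, because $\mathbf{U}^H\mathbf{w}$ and $\mathbf{V}^H\mathbf{w}$ are dependent, so verify the fourth-moment bound instead by splitting $\mathbf{A}$ into its Hermitian and skew-Hermitian parts and unitarily diagonalizing each, or by applying Wick's theorem directly.
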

\vspace{-0.5cm}
% \begin{lemm}\label{lemmatrace}
% Considering arbitrary vectors $\mathbf{x},\mathbf{y}\in \mathbb{C} ^{n_1}$ and the matrix $\mathbf{A}\in \mathbb{C} ^{n_1\times n_1}$, we have
% \begin{equation}\label{trace}
% \mathbf{x}^H\mathbf{A}\mathbf{y}=\mathrm{tr}(\mathbf{A}\mathbf{y}\mathbf{x}^H).
% \end{equation}
% \end{lemm}

\begin{figure*}
{{\begin{align}
&R_{BS,n_1n_{1}^{\prime}}^{ab}=-\frac{\eta}{8\pi}\cos \left[ u_0\left( d_{ab,v} \right) \right] \left\{ \begin{array}{c}
	-2C_i\left[ u_1\left( d_{ab,v},d_{n_1n_{1}^{\prime},h} \right) \right] -2C_i\left[ u_{1}^{\prime}\left( d_{ab,v},d_{n_1n_{1}^{\prime},h} \right) \right] +C_i\left[ u_2\left( d_{ab,v},d_{n_1n_{1}^{\prime},h} \right) \right]\notag\\ 
	+C_i\left[ u_{2}^{\prime}\left( d_{ab,v},d_{n_1n_{1}^{\prime},h} \right) \right] +C_i\left[ u_3\left( d_{ab,v},d_{n_1n_{1}^{\prime},h} \right) \right] +C_i\left[ u_{3}^{\prime}\left( d_{ab,v},d_{n_1n_{1}^{\prime},h} \right) \right]  \notag\\
\end{array} \right\} 
\\
&+\frac{\eta}{8\pi}\sin \left[ u_0\left( d_{ab,v} \right) \right] \left\{ \begin{array}{c}
	2S_i\left[ u_1\left( d_{ab,v},d_{n_1n_{1}^{\prime},h} \right) \right] -2S_i\left[ u_{1}^{\prime}\left( d_{ab,v},d_{n_1n_{1}^{\prime},h} \right) \right] -S_i\left[ u_2\left( d_{ab,v},d_{n_1n_{1}^{\prime},h} \right) \right] \notag \\ 
	+S_i\left[ u_{2}^{\prime}\left( d_{ab,v},d_{n_1n_{1}^{\prime},h} \right) \right] -S_i\left[ u_3\left( d_{ab,v},d_{n_1n_{1}^{\prime},h} \right) \right] +S_i\left[ u_{3}^{\prime}\left( d_{ab,v},d_{n_1n_{1}^{\prime},h} \right) \right] \tag{40} \label{parallel_r}\\ 
\end{array} \right\} 
\\
&X_{BS,n_1n_{1}^{\prime}}^{ab}=-\frac{\eta}{8\pi}\cos \left[ u_0\left( d_{ab,v} \right) \right] \left\{ \begin{array}{c}
	2S_i\left[ u_1\left( d_{ab,v},d_{n_1n_{1}^{\prime},h} \right) \right] +2S_i\left[ u_{1}^{\prime}\left( d_{ab,v},d_{n_1n_{1}^{\prime},h} \right) \right] -S_i\left[ u_2\left( d_{ab,v},d_{n_1n_{1}^{\prime},h} \right) \right] \notag\\
	-S_i\left[ u_{2}^{\prime}\left( d_{ab,v},d_{n_1n_{1}^{\prime},h} \right) \right] -S_i\left[ u_3\left( d_{ab,v},d_{n_1n_{1}^{\prime},h} \right) \right] -S_i\left[ u_{3}^{\prime}\left( d_{ab,v},d_{n_1n_{1}^{\prime},h} \right) \right] \notag\\ 
\end{array} \right\} 
\\
&+\frac{\eta}{8\pi}\sin \left[ u_0\left( d_{ab,v} \right) \right] \left\{ \begin{array}{c}
	2C_i\left[ u_1\left( d_{ab,v},d_{n_1n_{1}^{\prime},h} \right) \right] -2C_i\left[ u_{1}^{\prime}\left( d_{ab,v},d_{n_1n_{1}^{\prime},h} \right) \right] -C_i\left[ u_2\left( d_{ab,v},d_{n_1n_{1}^{\prime},h} \right) \right] \notag\\
	+C_i\left[ u_{2}^{\prime}\left( d_{ab,v},d_{n_1n_{1}^{\prime},h} \right) \right] -C_i\left[ u_3\left( d_{ab,v},d_{n_1n_{1}^{\prime},h} \right) \right] +C_i\left[ u_{3}^{\prime}\left( d_{ab,v},d_{n_1n_{1}^{\prime},h} \right) \right] \tag{41} \label{parallel_x}\\
\end{array} \right\} 
\end{align}}
\hrulefill
\vspace*{-0.6cm}
%\vspace*{3pt}
}\end{figure*}

\section{Modeling of $\mathbf{Z}_{BS,C}$ in \eqref{MC_Matrix}}\label{ZBS}

As for $\mathbf{Z}_{BS,C}$, for the array configuration as in Fig.~\ref{System}, motivated by \cite{wang2024analytical}, $\mathbf{Z}_{BS,C}$ can be constructed by $N_y\times N_y$ sub-matrices with the dimension of $N_x\times N_x$ each, that is $\mathbf{Z}_{BS,C}=[ \mathbf{Z}_{BS,C}^{ab} ] _{N_y\times N_y}$ with $\mathbf{Z}_{BS,C}^{ab} \in \mathbb{C} ^{N_x\times N_x} $ being the mutual impedance matrix between the $a$-th row of antennas and the $b$-th row of antennas, where  $\{ a,b \} =\{ 1,\dots ,N_y \}$. The mutual impedance between the $n_1$-th antenna in the $a$-th row and the $n_{1}^{\prime}$-th antenna in the $b$-th row, counting from the left to right, can be denoted as $z_{BS,n_1n_{1}^{\prime}}^{ab}$, which is the $(n_1,n_{1}^{\prime})$-th element of $\mathbf{Z}_{BS,C}^{ab} $. We can uniformly formulate $z_{BS,n_1n_{1}^{\prime}}^{ab}$ as $z_{BS,n_1n_{1}^{\prime}}^{ab}=R_{BS,n_1n_{1}^{\prime}}^{ab}+jX_{BS,n_1n_{1}^{\prime}}^{ab}$ and we denote $d_{n_1n_{1}^{\prime},h}=| n_1-n_{1}^{\prime} | \Delta _{x}$, $d_{ab,v}=| a-b | \Delta _{y}$, and $d_{n_1n_{1}^{\prime}}^{ab}=\sqrt{( n_1-n_{1}^{\prime} ) ^2\Delta _{x}^{2}+( a-b ) ^2\Delta _{y}^{2}}$ as the horizontal distance, vertical distance, and distance between the $n_1$-th antenna in the $p$-th row and the $n_{1}^{\prime}$-th antenna in the $q$-th row, respectively. Based on \cite[Sec. 8.6.2]{balanis2016antenna}, we can compute $z_{BS,n_1n_{1}^{\prime}}^{ab}$ over four possible combinations for $p,q$, and $n_1,n_{1}^{\prime}$. For $a=b, n_1=n_{1}^{\prime}$, $z_{BS,n_1n_{1}^{\prime}}^{ab}=Z_A$. For $a=b, n_1\ne n_{1}^{\prime}$, according to the ``side-by-side" scenario in \cite[Sec. 8.6.2]{balanis2016antenna}, we have $R_{BS,n_1n_{1}^{\prime}}^{ab}=\frac{\eta}{4\pi}\{ 2C_i[ v_0( d_{n_1n_{1}^{\prime},h} ) ] -C_i[ v_1( d_{n_1n_{1}^{\prime},h} ) ] -C_i[ v_2( d_{n_1n_{1}^{\prime},h} ) ] \} $ and $X_{BS,n_1n_{1}^{\prime}}^{ab}=-\frac{\eta}{4\pi}\{ 2S_i[ v_0( d_{ab,v} ) ] -S_i[ v_1( d_{n_1n_{1}^{\prime},h} ) ] -S_i[ v_2( d_{n_1n_{1}^{\prime},h} ) ] \} $
with $v_0( d_{n_1n_{1}^{\prime},h} ) =\kappa d_{n_1n_{1}^{\prime},h}$, $v_1( d_{n_1n_{1}^{\prime},h} ) =\kappa [ \sqrt{d_{n_1n_{1}^{\prime},h}^{2}+\Delta_{l}^{2}}+\Delta _l ] $, and $v_2( d_{n_1n_{1}^{\prime},h} ) =\kappa [ \sqrt{d_{n_1n_{1}^{\prime},h}^{2}+\Delta_{l}^{2}}-\Delta _l ] $. For $a\ne b, n_1=n_{1}^{\prime}$, according to the 
in \eqref{parallel_r} and \eqref{parallel_x}, respectively, where $u_0( d_{ab,v} ) =\kappa d_{ab,v}$, $u_1( d_{ab,v},d_{n_1n_{1}^{\prime},h} ) =\kappa [ ( d_{n_1n_{1}^{\prime},h}^{2}+d_{ab,v}^{2} ) ^{1/2}+d_{ab,v} ] $, $u_{1}^{\prime}( d_{ab,v},d_{n_1n_{1}^{\prime},h} ) =\kappa [ ( d_{n_1n_{1}^{\prime},h}^{2}+d_{ab,v}^{2} ) ^{1/2}-d_{ab,v} ] $, $u_2( d_{ab,v},d_{n_1n_{1}^{\prime},h} ) =\kappa \{ [ d_{n_1n_{1}^{\prime},h}^{2}+( d_{ab,v}-\Delta _l ) ^2 ] ^{1/2}+( d_{ab,v}-\Delta _l ) \} $, $u_{2}^{\prime}( d_{ab,v},d_{n_1n_{1}^{\prime},h} ) =\kappa \{ [ d_{n_1n_{1}^{\prime},h}^{2}+( d_{ab,v}-\Delta _l ) ^2 ] ^{1/2}-( d_{ab,v}-\Delta _l ) \} $, $u_3( d_{ab,v},d_{n_1n_{1}^{\prime},h} ) =\kappa \{ [ d_{n_1n_{1}^{\prime},h}^{2}+( d_{ab,v}+\Delta _l ) ^2 ] ^{1/2}+( d_{ab,v}-\Delta _l ) \} $, and $u_{3}^{\prime}( d_{ab,v},d_{n_1n_{1}^{\prime},h} ) =\kappa \{ [ d_{n_1n_{1}^{\prime},h}^{2}+( d_{ab,v}+\Delta _l ) ^2 ] ^{1/2}-( d_{ab,v}-\Delta _l ) \} $, respectively. Moreover, for the final possible scenario $a\ne b, n_1=n_{1}^{\prime}$, according to the ``collinear" scenario in \cite[Sec. 8.6.2]{balanis2016antenna}, we have 
\setcounter{equation}{41}
\begin{equation}
\begin{aligned}
R_{BS,n_1n_1}^{ab}&=-\frac{\eta}{8\pi}\cos [ u_0( d_{ab,v} ) ] \{ -2C_i[ 2u_0( d_{ab,v} ) ] +C_i[ u_5( d_{ab,v} ) ] \\
&+C_i [ u_4 ( d_{ab,v}  )  ] -\ln  [ u_6 ( d_{ab,v}  )  ]  \} +\frac{\eta}{8\pi}\sin  [ u_0 ( d_{ab,v}  )  ]  \\
&\{ 2S_i [ 2u_0 ( d_{ab,v}  )  ] -S_i [ u_5 ( d_{ab,v}  )  ] S_i [ u_4 ( d_{ab,v}  )  ]  \},
\end{aligned}
\end{equation}
\begin{equation}
\begin{aligned}
X_{BS,n_1n_1}^{ab}&=-\frac{\eta}{8\pi}\cos [ u_0( d_{ab,v}  )  ] \{ 2S_i[ 2u_0( d_{ab,v}  )  ] -S_i[ u_5( d_{ab,v}  )  ] \\
&+S_i[ u_4( d_{ab,v}  )  ]  \} +\frac{\eta}{8\pi}\sin [ u_0( d_{ab,v}  )  ] \{ 2C_i[ 2u_0( d_{ab,v}  )  ] \\
&-C_i[ u_5( d_{ab,v}  )  ] -C_i[ u_4( d_{ab,v}  )  ] -\ln [ u_6( d_{ab,v}  )  ]  \}
\end{aligned}
\end{equation}
where $u_4( d_{ab,v} ) =2\kappa ( d_{ab,v}+\Delta _l ) $ and $u_5( d_{ab,v} ) =2\kappa ( d_{ab,v}-\Delta _l ) $. In summary, we can easily derive $\mathbf{Z}_{BS}$ by plugging all these results in \eqref{MC_Matrix}.

\end{appendices}

\bibliographystyle{IEEEtran}
\bibliography{IEEEabrv,Ref}

\end{document}